\documentclass[11pt]{article}
\usepackage{amsthm}
\usepackage{mathtools}
\usepackage{fullpage}
\usepackage{thm-restate}
\usepackage{amsmath}
\usepackage{amssymb}
\usepackage{palatino}
\usepackage{setspace}
\usepackage[colorlinks=true, linkcolor=blue, citecolor=blue, urlcolor=blue]{hyperref}

\newtheorem{theorem}{Theorem}[section]
\newtheorem{conjecture}{Conjecture}[section]
\newtheorem{definition}{Definition}[section]
\newtheorem{lemma}{Lemma}[section]
\newtheorem{corollary}{Corollary}
\newtheorem{question}{Question}
\newtheorem{problem}{Open Problem}

\DeclarePairedDelimiter\abs{\lvert}{\rvert}
\DeclarePairedDelimiter\norm{\lVert}{\rVert}
\DeclarePairedDelimiter\ceil{\lceil}{\rceil}

\DeclarePairedDelimiter\ip{\langle}{\rangle}
\DeclarePairedDelimiter\rb{(}{)}

\newcommand{\cclass}[1]{\mathsf{#1}}
\newcommand{\acc}{\cclass{{ACC}}^0}

\newcommand{\ccz}{\cclass{{CC}}^0}
\newcommand{\AND}{\cclass{{AND}}}
\newcommand{\NOT}{\cclass{{NOT}}}
\newcommand{\OR}{\cclass{{OR}}}
\newcommand{\modm}{\cclass{{MOD}}_m}
\newcommand{\modg}{\cclass{{MOD}}}
\newcommand{\maj}{\cclass{{MAJORITY}}}
\newcommand{\modq}{\cclass{{MOD}}_q}
\newcommand{\modp}{\cclass{{MOD}}_p}

\newcommand{\poly}{\cclass{{poly}}}
\newcommand{\aut}{\mathop{{Aut}}}
\newcommand{\nulls}{\mathop{{nullspace}}}

\title{On CC\texorpdfstring{\textsuperscript{0}}{0} Lower Bounds for AND via Torus Polynomials}
\author{Vaibhav Krishan\thanks{The Institute of Mathematical Sciences, Chennai, India. Email:~{\tt vaibhavk@imsc.res.in}} \and Jayalal Sarma\thanks{Indian Institute of Technology Madras (IIT Madras), Chennai, India. Email:~{\tt jayalal@cse.iitm.ac.in}}}

\begin{document}

\maketitle

\begin{abstract}

We explore the torus polynomial approximation based approach towards a long-standing question: whether \( \AND \) can be computed by \(\ccz\) circuits - the class of constant-depth polynomial size circuits containing \(\modm\) gates for some natural number \(m\).
Bhrushundi, Hosseini, Lovett and Rao (ITCS 2019) introduced torus polynomial approximations as an approach for proving lower bounds against \(\acc\) - a class containing \(\ccz\) where the circuits are also allowed \(\AND\), \(\OR\) and \(\NOT\) gates. 

We show how lower bounds for torus polynomials approximating \(\AND\) can be used to make progress on this question. 
Using lower bounds on the degree of symmetric torus polynomials approximating \( \AND \), proved by Krishan and Vishwanathan (ITCS 2026), we prove size lower bounds for \emph{symmetric} \( \ccz \)-circuits computing \( \AND\).
More precisely, we prove that any depth $h$ symmetric \(\ccz\) circuit requires \( 2^{\widetilde{\Omega}(n^{1/O(h)})} \) size to compute \(\AND\). 

A key ingredient in our proof is an argument that we can construct symmetric torus polynomials to approximate symmetric \( \ccz \) circuits.
Our construction exhibits an explicit correspondence between the symmetry of the circuit and that of the polynomial.
Using this, we also establish lower bounds for weaker notions of circuit symmetry.
Lower bounds for symmetric \( \ccz \) circuits were also independently established by Pago (ICALP 2026) using different techniques.

In the \emph{asymmetric} regime, we establish degree upper bounds for depth three circuits of the form \(\modp \circ \modg_{m} \circ \AND_{O(1)}\) where \(m=pq\) is a semiprime.
This circuit class is a special case of the \textit{constant degree hypothesis}, introduced by Barrington, Straubing and Th{\'e}rien (Information and Computation, 1990), where $m$ could be an arbitrary composite number.
We argue that improved lower bounds for asymmetric torus polynomials approximating \(\AND\) imply size lower bounds for semiprime $m$ and hence progress on the constant-degree hypothesis.

\end{abstract}

\newpage
\tableofcontents

\section{Introduction}

The polynomial method has proven to be a powerful tool for tackling fundamental questions in theoretical computer science.
In particular, studying polynomial approximations for Boolean functions has led to advances in cryptography, quantum computing and circuit complexity (see~\cite{Aar08,BT20} and references therein).
In circuit complexity, considering the degree of polynomials approximating Boolean functions has led to remarkable progress in proving circuit lower bounds for explicit functions, a notoriously difficult quest in the area.

Landmark results in this direction were proved by Razborov~\cite{razborov1987lower}, and independently by Smolensky~\cite{smolensky1987algebraic}.
They proved that for any prime $p$, $\modp$\footnote{A \(\modp\) gate outputs \(1\) if and only if the sum of its inputs is not divisible by \(p\).} cannot be computed by constant-depth polynomial size circuits that use \(\AND, \OR, \NOT\) and $\modq$ gates for a prime \(q \neq p\).
In addition, they used the same technique to prove that such circuits cannot compute $\maj$.
One main technical step in these arguments is that any function in the circuit class can be approximated by a low-degree polynomial over the finite field $\mathbb{F}_q$.
This is complemented by an argument that the candidate function requires a high degree for any polynomial that approximates it.

Since then, polynomial approximations have been used extensively for various lower bounds against constant-depth circuits \cite{alon2020unbalancing,hrubes2019lower,limaye2021fixed,oliveira2019parity}, and even for constructing \emph{pseudo-random generators} fooling such circuits~\cite{harsha2019polynomial}.
However, the method has hit a roadblock in showing superpolynomial lower bounds when $\modg_6$ gates are allowed.
More generally, it is unclear whether low-degree polynomials over fields/rings can approximate functions in \(\acc\), the class of constant-depth polynomial size circuits that use \(\modm\) gates for some natural number \(m\).

Recently, Bhrushundi, Hosseini, Lovett and Rao~\cite{bhrushundi2019torus} introduced a new notion of polynomial approximations, called \emph{torus polynomial approximations} (see also~\cite{krishan2021upper}) \footnote{A polynomial $P$ is a torus polynomial that $\varepsilon$-approximates a Boolean function $f$, if for every $x \in \{0,1\}^n$, the fractional part of $P(x)$ is $\varepsilon$-close to $\frac{f(x)}{2}$.}, specifically towards approximating functions in \(\acc\).
They showed that all functions in \(\acc\) have low-degree torus polynomial approximations, establishing it as a plausible method for proving lower bounds against \(\acc\).
However, strong lower bounds are not known on the degree required for approximating any explicit function under this notion of approximation.
In particular, the authors in~\cite{bhrushundi2019torus} conjectured that low-degree torus polynomials cannot approximate $\maj$. 

The authors in~\cite{bhrushundi2019torus} further showed that if we restrict to symmetric polynomials\footnote{A polynomial is symmetric if monomials of the same degree share the same coefficient.}, then $\maj$ requires $\Omega\rb*{\sqrt{\frac{n}{\log n}}}$ degree to approximate it within an error of $\frac{1}{20n}$.
In a subsequent work, Krishan and Vishwanathan~\cite{krishan2026lower} showed that this restriction on symmetry is too strong.
The authors proved, in~\cite[Theorem 7]{krishan2026lower}, that the same degree lower bound holds even for the $\AND$ function.
They use this lower bound to argue that studying symmetric torus polynomials is not suitable for resolving the $\maj$ vs $\acc$ question.
In this work, we use symmetric torus polynomials towards another dual frontier - whether constant-depth polynomial size circuits can compute $\AND$ with only $\modg$ gates.
More precisely:

\begin{question}
    \label{que:cczand}
    Does \(\ccz\) contain the \(\AND\) function?
\end{question}
where \(\ccz = \bigcup_{m \in \mathbb{N}} \ccz[m]\), and \(\ccz[m]\) denotes the class of constant-depth polynomial size circuits comprising \(\modm\) gates.

It is widely believed, especially since the work in~\cite{barrington1990non, barrington1994representing}, that $\AND \notin \ccz$.
This question is well-understood when \(m\) is a power of a prime \(p\).
Using classical results from~\cite{razborov1987lower, smolensky1987algebraic}, one can obtain small-degree polynomials over \(\mathbb{F}_p\) representing\footnote{A polynomial \(P\) represents a function over \(\mathbb{F}_p\) if for each \(x \in \{0, 1\}^n\), \(P(x) = f(x) \mod p\).} such circuits.
On the other hand, \(\AND\) requires a large degree to represent over any \(\mathbb{F}_p\), hence proving the lower bound.

However, progress on the conjecture when $m$ is composite is limited to slightly superlinear lower bounds~\cite{chattopadhyay2006lower} in the general setting, or strong lower bounds in highly restricted settings~\cite{kawalek2025violating, pago2026optimal}.
In fact, a seemingly simple case of Question~\ref{que:cczand} remains open, posed by Barrington, Straubing and Th{\'e}rien ~\cite{barrington1990non}, known as the \emph{constant degree hypothesis}.
They conjectured that any circuit of the form \(\modp \circ \modm \circ \AND_{O(1)}\) requires exponential size to compute \(\AND\).
Here, \(p\) is a prime, and \(\AND_{O(1)}\) denotes an \(\AND\) gate of constant fan-in.

\subsubsection*{Our Results:}
We contribute to the program of proving circuit lower bounds using torus polynomial approximations against symmetric $\ccz$ circuits.
A circuit is said to be \emph{symmetric} if each permutation of its variables can be extended to an automorphism of the underlying DAG.
Kawa\l{}ek and Wie{\ss}~\cite{kawalek2025violating} initiated the study of symmetric \(\ccz\) circuits, and proved a special case of the constant degree hypothesis.
Their lower bound applied to symmetric \(\modp \circ \modq \circ \AND_{O(1)}\) circuits for two primes \(p\) and \(q\).
Soon thereafter, Pago~\cite{pago2026optimal} extended the lower bound to symmetric \(\ccz\) circuits.

Building on the lower bound in~\cite[Theorem 7]{krishan2026lower}, on the degree required to approximate $\AND$ by symmetric torus polynomials, we prove a size lower bound for \emph{symmetric} \(\ccz\) circuits computing \(\AND\). More formally,

\begin{theorem}
\label{thm:symcczlb}
Any symmetric \(\ccz\) circuit of depth \(h\) requires size \(s(n) = {2^{\widetilde{\Omega}(n^{1/O(h)})}}\) to compute \(\AND_n\).
\end{theorem}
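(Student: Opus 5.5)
The plan has two halves, combined by contradiction with the degree lower bound of~\cite[Theorem 7]{krishan2026lower}.
\begin{itemize}
\item First, show that every symmetric $\ccz$ circuit of depth $h$ and size $s$ has a \emph{symmetric} torus polynomial approximation of degree roughly $(\log s)^{O(h)}$, up to polylogarithmic factors in $n$, with error $\varepsilon \le \frac{1}{20n}$.
\item Second, apply that theorem: symmetric torus polynomials $\frac{1}{20n}$-approximating $\AND_n$ need degree $\Omega(\sqrt{n/\log n})$.
\end{itemize}
Together these give $(\log s)^{O(h)} \cdot \mathrm{polylog}(n) \ge \Omega(\sqrt{n/\log n})$, hence $\log s \ge \widetilde{\Omega}(n^{1/O(h)})$, which is the claimed bound.

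For the construction I would follow the inductive gate-by-gate approach of~\cite{bhrushundi2019torus}. Each $\modm$ gate is replaced by a torus-polynomial-style approximation, and these are composed level by level. A $\modm$ gate on inputs $y_1,\dots,y_m$ computes $1 - \frac{1}{m}\sum_{j=0}^{m-1} e^{2\pi i j(\sum y_i)/m}$, so its output is determined by the fractional part of $\frac{1}{m}\sum_i y_i$. The standard technique converts "fractional part close to a value" into a real polynomial of degree $O(\log(1/\varepsilon))$ in the approximating quantities, for instance via low-degree approximations of the periodic indicator on the torus. Each layer should therefore multiply the degree by $\mathrm{polylog}(s/\varepsilon)$. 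Composing $h$ layers with per-gate error $\varepsilon/s$ and a union bound then gives degree $(\log (sn))^{O(h)}$ and total error $\varepsilon$.

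Symmetry is where the argument goes beyond~\cite{bhrushundi2019torus}. Each gate's polynomial must be built canonically from the polynomials of its children, by the same recipe with no arbitrary choices. An automorphism of the DAG extending a variable permutation $\sigma$ maps gate $g$ to $\sigma(g)$, and by induction $P_{\sigma(g)}(x) = P_g(\sigma^{-1} x)$. The output gate is fixed by every automorphism, so its polynomial is invariant under all of $S_n$ and hence symmetric in the sense of the paper. To make this work I would avoid randomized steps such as random restrictions or random sampling of gates to reduce fan-in, which break symmetry, and use only deterministic symmetric operations like sums and products over children and fixed univariate polynomials. Any randomness needed to keep the degree low would have to be replaced by averaging over the automorphism group. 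Averaging over the group preserves the approximation pointwise only if the fractional-part closeness survives averaging. It does not do so directly, since fractional parts do not average linearly. This is a real subtlety, so I aim for a fully deterministic construction.

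The main obstacle is this first half: getting a deterministic, canonical, low-degree torus approximation of $\modm \circ (\text{lower levels})$. The difficulty is that the input fan-in can be as large as $s$, while the degree may depend only polylogarithmically on $s$. For the top layer, a $\modm$ gate composed with exact $0/1$ inputs, I would first try the exact representation $\frac{1}{m}\sum_i y_i \bmod 1$. This is linear in the $y_i$ and already a torus-type expression, so error enters only when the child values are approximate. The work then reduces to showing that an approximate $0/1$ value can be sharpened symmetrically, using a fixed amplifying univariate polynomial of degree $O(\log(s/\varepsilon))$ applied to each child's torus value, before summing. If this sharpening step goes through canonically, the degree recursion and the symmetry-preservation argument above finish the proof.
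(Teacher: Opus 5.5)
Your outer structure matches the paper's: a canonical, deterministic torus approximation of degree \((\log s)^{O(h)}\), symmetry transferred because every automorphism fixes the output gate, and then \cite[Theorem 7]{krishan2026lower}. The gap is in the step you flag yourself, and the mechanism you propose there cannot work. A torus approximation \(P_i\) of a child only tells you \(P_i(a) = Z_i(a) + y_i(a)/2 + \delta_i(a)\), with an uncontrolled and possibly huge integer \(Z_i(a)\). So the only operations that stay meaningful modulo \(1\) are integer linear combinations. Scaling by \(1/m\) makes the result depend on \(\sum_i Z_i(a) \bmod m\). Applying a non-affine univariate polynomial, such as your ``amplifying'' polynomial or an approximation to a periodic indicator, creates terms like \(2Z_i(a)\delta_i(a)\) whose fractional parts are arbitrary. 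Moreover, no polynomial approximates a non-constant \(1\)-periodic function over an unbounded range of integer parts. Even at the top layer with exact \(0/1\) inputs, \(\frac{1}{m}\sum_i y_i \bmod 1\) equals \((\sum_i y_i \bmod m)/m\), not \(\modm(y)/2\). Turning one into the other is exactly such a non-polynomial operation on fractional parts.

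The missing idea, from \cite{bhrushundi2019torus} and reproduced in Lemmas~\ref{lem:cczintpoly} and~\ref{lem:modtorus}, has two parts. First, carry a stronger invariant than closeness modulo \(1\): an \emph{integer} polynomial \(Q\) with \(Q(a) \equiv C(a)2^{\ell} + E(a) \pmod{2^{\ell+e}}\) and \(0 \le E(a) \le 2^{\ell-e}\). For a single \(\modq\) gate (\(q\) a prime power) this comes from Fermat--Euler and modulus-amplifying polynomials, and you divide by \(2^{\ell+1}\) only at the very end. Second, never substitute approximations into polynomials. Instead, expand the top gate's polynomial in its \emph{Boolean} inputs and regard each monomial as an \(\AND\) of child subcircuits. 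Push that \(\AND\) to the bottom, apply induction to the resulting Boolean circuits of smaller depth, and simply add the integer polynomials obtained, taking \(e' \ge \log(\text{number of monomials})\) to absorb the summed errors. The pushing step requires first converting the circuit into a layered modular circuit: one prime-power modulus per layer and constant fan-in \(\AND\)s at the bottom. This uses \(\modm = \AND\) of \(\modg_{q_i}\) gates and the \(\AND \circ \modq \to \modq \circ \AND\) switch, and you must check that this conversion preserves symmetry (Lemma~\ref{lem:symcczconvert}). After that, your canonicity/automorphism argument goes through as in Lemma~\ref{lem:permtorus}.
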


We use the degree of symmetric torus polynomials approximating a function within $\frac{1}{20n}$ error as the measure for the lower bound. More precisely, we construct low-degree symmetric torus polynomials that approximate symmetric $\ccz$ circuits within $\frac{1}{20n}$ error. We follow the construction from~\cite[Corollary 20]{bhrushundi2019torus} to obtain the torus polynomials approximating \(\ccz\) circuits. Our main contribution is to show that the construction produces symmetric torus polynomials if we start with symmetric circuits. Then, we use the degree lower bound from~\cite[Theorem 7]{krishan2026lower} to prove a size lower bound for symmetric \(\ccz\) circuits computing \(\AND\).

In an independent line of work towards establishing size lower bounds against symmetric circuits, Kawa\l{}ek and Wie{\ss}~\cite{kawalek2025violating} and Pago~\cite{pago2026optimal} use the period of symmetric functions as the measure.
A symmetric function \(f\) over \(n\) variables has period \(b\) if \(f(i) = f(i+b)\) for each \(0 \leq i \leq n-b\), where \(f(i)\) denotes the function's value at points with Hamming weight \(i\). In both~\cite{kawalek2025violating, pago2026optimal}, the main technical argument is to establish that respective symmetric functions computed by the circuits have small period.
In contrast, \(\AND\) has period exactly \(n\), which leads to lower bounds on the size of the circuit.

We show that the degree of approximation by symmetric torus polynomials subsumes the period of a function as a measure.
Formally, consider the measure \(\mu(f)\) for a Boolean function \(f\), defined as the minimum degree of a symmetric torus polynomial that \(\frac{1}{20n}\)-approximates \(f\).
We show that for any function with a small period, $\mu(f)$ is small.
Note that our statement assumes that the period has only a few distinct prime divisors.
This is indeed true for the period of symmetric \(\ccz\) circuits as described in~\cite{pago2026optimal} (see~\cite[Lemma 3.3]{pago2026optimalarxiv} for the exact expression of the period).

\begin{restatable}{theorem}{thmperiodapprox}
\label{introthm:periodapprox}
    Consider any periodic symmetric function \(f: \{0, 1\}^n \to \{0, 1\}\) with period \(m\), where \(m\) has \(O(1)\) distinct prime divisors.
    Then, there exists a symmetric torus polynomial of degree \(d=O(m \log^{O(1)}(n))\) approximating \(f\) within \(\frac{1}{20n}\) error.
    In other words, \(\mu(f) = \widetilde{O}(m)\), where \(\widetilde{O}\) hides some polylog factors.
\end{restatable}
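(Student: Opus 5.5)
Write $f(x)=F(|x|)$, where $F:\{0,\dots,n\}\to\{0,1\}$ satisfies $F(i)=F(i \bmod m)$. The plan is to express $F$ through indicator functions of residues modulo prime powers, and then approximate each piece by a low-degree symmetric torus polynomial.

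First factor $m=\prod_{j=1}^{t} p_j^{e_j}$ with $t=O(1)$. By the Chinese remainder theorem, $|x| \bmod m$ is determined by the tuple $(|x| \bmod p_j^{e_j})_j$. So
\[
f(x)=\sum_{r\in\mathbb{Z}_m} F(r)\prod_{j=1}^{t}\mathbf{1}\bigl[|x|\equiv r \pmod{p_j^{e_j}}\bigr],
\]
where the sum has exactly one nonzero term for each $x$.

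For a single prime power $q=p^e$, the indicator $\mathbf{1}[|x|\equiv r \pmod q]$ is exactly computable by a symmetric polynomial over $\mathbb{F}_p$ of degree $O(q)$. This uses Lucas' theorem: the base-$p$ digits of $|x|$ below $p^e$ are $\binom{|x|}{p^k} \bmod p$, each of degree $p^k$, and a digit test costs a factor of $p-1$. Equivalently, it is an integer polynomial $g_{j,r}(x)$ in the elementary symmetric polynomials $\binom{|x|}{k}$, $k<q$, such that $g_{j,r}(x)\bmod p$ equals the indicator. Dividing by $p$ gives a torus-friendly object: $g/p \bmod 1\in\{0,1/p,\dots\}$.

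The real work is to combine the $t$ moduli and then land on the values $\{0,1/2\}$ modulo 1 with error $\frac{1}{20n}$. The plan is to follow the \cite{bhrushundi2019torus} toolkit, which converts a $\modp$-type output into a torus approximation with polylog overhead; their Corollary 20 construction is what gives $\ccz$ circuits low-degree torus approximations. Concretely, I would view $f$ as a depth-$O(1)$ circuit of the form $\OR_{r}\circ \AND_t \circ (\text{exact } \modg_{p_j^{e_j}}\text{ residue tests})$. The $\OR$ over $r$ is harmless because at most one term is nonzero, so it becomes a plain sum. The $\AND$ of $t=O(1)$ residue tests, each of degree $O(p_j^{e_j})$, can be written as a product of Boolean-valued symmetric polynomials (for instance via Fermat, $1-g^{p-1}$ over the right modulus). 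After lifting to $\{0,1\}$-valued real functions with the \cite{bhrushundi2019torus} rounding lemma, each factor has degree $\widetilde O(q_j)$. The product then has degree $\widetilde O(\sum_j q_j)$ if I sum rather than multiply degrees using CRT-style indicator sums; naively multiplying would give $\widetilde O(\prod_j q_j)=\widetilde O(m)$, which is also acceptable. Finally, I multiply the resulting Boolean real-valued approximation by $1/2$ to obtain a torus polynomial.

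Symmetry is preserved throughout. Every intermediate polynomial is a polynomial in $|x|$, written in the basis $\binom{|x|}{k}$, so it is symmetric. The degree bounds I would track are $O(m)$ times the polylog factors coming from the error-reduction step, which is run at error $1/\poly(n)$ and incurs $\log^{O(1)} n$ overhead; this matches $d=O(m\log^{O(1)} n)$.

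The main obstacle is the lifting step from exact mod-$p$ polynomials to an approximation of $\{0,1/2\}$ modulo 1 while keeping symmetry and only polylog blowup. Multiplication of torus polynomials is not well defined modulo 1, so products must be formed before the fractional-part interpretation. For this I would reuse the \cite{bhrushundi2019torus} lemma that a $\modp$ gate applied to degree-$d$ polynomials has a torus approximation of degree $O(d\cdot\poly\log(n/\varepsilon))$. I would check that its construction (a symmetric amplification/rounding polynomial composed with the input) keeps symmetric inputs symmetric. The $O(1)$ bound on the number of prime factors is what keeps the number of such compositions constant, so the overhead stays polylogarithmic rather than growing with $\omega(m)$.
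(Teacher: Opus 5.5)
Your outer decomposition matches the paper: the sum over residues $r$ (at most one term is nonzero), the CRT split into prime-power tests, and the Lucas-digit polynomials that compute each test exactly modulo $p_j$ with degree $q_j-1$. The gap is the step you yourself call the real work, namely combining tests for \emph{different} primes into one torus approximation. Each mechanism you offer for it fails.

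(i) There is no ``$\{0,1\}$-valued real'' lift of degree $\widetilde O(q_j)$. The indicator $\mathbf{1}[|x|\equiv r \pmod q]$ changes value near weight $n/2$, so by Paturi's theorem its approximate real degree is $\Theta(n)$ when $q\ll n$ (for $q=2$ it is parity). The \cite{bhrushundi2019torus} lemma gives only $Q_j(x)\equiv b_j(x)\,2^{\ell}+E_j(x) \pmod{2^{\ell+e}}$.

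(ii) Such lifts cannot be multiplied. Write $Q_1=b_1\,2^{\ell}+E_1+k_1\,2^{\ell+e}$ with $k_1$ uncontrolled. Then $Q_1Q_2$ contains $k_1E_2\,2^{\ell+e}$, an arbitrary multiple of $2^{\ell+e}$. This destroys $b_1b_2\,2^{2\ell}$ modulo any power of $2$ fine enough to see it, so ``multiply the lifts, then multiply by $1/2$'' has no meaning.

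(iii) Your fallback is to form the product before the fractional-part interpretation and then apply one ``$\modp$ gate on degree-$d$ polynomials'' lemma. That lemma needs a single prime, but your factors are exact only modulo different primes. The AND is not $\mathbf{1}[g(x)\equiv 0\pmod p]$ for any low-degree integer $g$: by Lucas, an exact low-degree $\mathbb{F}_p$-representation of a symmetric function forces a power-of-$p$ period, which a test modulo $p_1p_2$ lacks.

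Also, multiplying polynomials adds degrees, so your $\widetilde O(\sum_j q_j)$ versus $\widetilde O(\prod_j q_j)$ remark is backwards, and the additive bound has no support.

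The missing idea is composition with a precision boost. The paper rewrites $\modm$ as a symmetric layered circuit $\modg_{q_1}\circ\cdots\circ\modg_{q_r}\circ\AND$ (Lemmas~\ref{lem:symcczconvert} and~\ref{lem:modmconvert}) and runs the induction of Lemma~\ref{lem:cczintpoly}:
\begin{itemize}
\item the outer test is kept as an \emph{exact} integer polynomial, amplified modulo $q_1^{\ell'}$, and expanded into monomials;
\item each monomial, together with the remaining tests, is a single Boolean function, lifted recursively with precision $e'\ge\log(\#\text{monomials})$;
\item the pieces are combined only linearly;
\item the modulus $q_1^{\ell'}$ is converted to a power of $2$ via the constant $w\lceil 2^{\ell}/q_1^{\ell'}\rceil+2^{\ell}$ from Lemma~\ref{lem:modtorus}.
\end{itemize}
For two primes this is in effect $C\cdot R(x)\cdot Q_2(x)$. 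Here $R\equiv b_1\pmod{p_1^{\ell'}}$ exactly, and $Q_2$ is a lift of $b_2$ whose precision exceeds $\log\max_x|R(x)|=\widetilde O(q_1)$. That boost is exactly why the degrees multiply, giving $\prod_j q_j\cdot\log^{O(1)}n=\widetilde O(m)$ over $r=O(1)$ layers.

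Once the product is organized this way (an exact symmetric $R$ times a symmetric lift), your observation that everything is a polynomial in $|x|$ does give symmetry more directly than the paper's circuit-automorphism argument (Lemmas~\ref{lem:permtorus} and~\ref{lem:symtorus}). As written, though, the proposal never produces a torus approximation of the AND across distinct primes.
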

    
We note that \(\AND\) has period \(n\) but has \(\widetilde{O}(\sqrt{n})\)-degree symmetric torus polynomial approximations (see~\cite[Remark 22]{krishan2026lower}).
Hence, there is a quadratic separation between the period and the degree. 
It is conceivable that the degree measure captures a larger class of functions than periodic functions.
Therefore, studying torus polynomial approximations, and in particular symmetric torus polynomials, may enable progress beyond studying the period of functions.

However, note that the size lower bound we prove in Theorem~\ref{thm:symcczlb} is quantitatively weaker than the corresponding lower bound in~\cite{pago2026optimal}.
The exponent in our result diminishes as the depth of the circuit grows.
On the other hand, in~\cite{pago2026optimal}, they prove a size lower bound of the form \(2^{\Omega\rb{n^{1/r}}}\) for symmetric \(\ccz[m]\) circuits, where \(r\) denotes the number of distinct prime divisors of \(m\).
Crucially, their lower bound does not depend on the circuit's depth.
Theorem~\ref{introthm:periodapprox} shows that, in principle, a similar dependence on \(r\) can be obtained through degree as a measure as well. To see this, we note that in~\cite{pago2026optimal}, they proved that any size-\(s\) symmetric \(\ccz[m]\) circuits computes a symmetric function with period \(O(s^r)\). Moreover, the period has $O(1)$ many distinct prime divisors.
Hence, we can combine Theorem~\ref{introthm:periodapprox}, and the degree lower bound from~\cite{krishan2026lower}, to obtain a size lower bound of the form \(2^{\widetilde{\Omega}\rb{n^{1/2r}}}\).

Finally, our approach provides an advantage when considering relaxed notions of symmetries for \(\ccz\) circuits.
Pago~\cite{pago2026optimal} also studied \(\ccz\) circuits in a more general setting, which they called \emph{nested block symmetry}, where the automorphism group of the circuit extends a particular subgroup of \(S_n\).
Using our techniques, we can derive lower bounds for such circuits as well; see Theorem~\ref{thm:nestedsymlb} for the statement. We note that our approach establishes a correspondence between the circuit's symmetry and that of the approximating polynomial. This makes it easier to establish the lower bound for restricted notions of symmetry of the circuit without introducing an additional measure (compared to \cite{pago2026optimal}).

\noindent {\bf Asymetric Toroidal Approximations:}
Motivated by the above discussion, we propose the degree of torus polynomial approximations as a measure to study the \(\ccz\) vs \(\AND\) question.
In the special case when the polynomial is symmetric, we get a size lower bound for symmetric \(\ccz\) circuits.
Next, we show that progress on lower bounds for general torus polynomials approximating \(\AND\) can lead to progress on an important case of the constant degree hypothesis.
For simplicity, we state the result for \(\modg_2 \circ \modm \circ \AND_{O(1)}\) circuits when \(m\) is an even semiprime, see Theorem~\ref{thm:cdhapprox} for a more general statement.

\begin{theorem}
\label{thm:2cdhapprox}
    Consider any \(\modg_2 \circ \modm \circ \AND_{O(1)}\) circuit \(C\) of polynomial size, where \(m\) is an even semiprime.
    Then, there exists a degree-\(O(\log(n))\) torus polynomial approximating \(C\) within \(\frac{1}{\Omega(n)}\) error.
\end{theorem}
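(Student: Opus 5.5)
We need a torus polynomial $P$ of degree $O(\log n)$ such that, for every $x \in \{0,1\}^n$, the fractional part of $P(x)$ is $\frac{1}{\Omega(n)}$-close to $C(x)/2$, where $m=2q$ with $q$ an odd prime. The plan is to handle the $\modm$ gates one at a time, then sum them and read off the top $\modg_2$ gate as a single division by $2$. The starting observation is that a $\modm$ gate with $m = 2q$ fires exactly when its input sum is nonzero modulo $2$ or nonzero modulo $q$. So each middle gate is the $\OR$ of a $\modg_2$ gate and a $\modg_q$ gate applied to the same constant-degree polynomial $L(x) = \sum_j a_j \cdot \AND_j(x)$.

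\textbf{Step 1 (each middle gate as a polynomial with error).} Write the gate as
\begin{equation*}
g(x) = 1 - \bigl(1-\modg_2(L)\bigr)\bigl(1-\modg_q(L)\bigr).
\end{equation*}
The $\modg_2$ part is exact over the integers modulo $2$: the quantity $\frac{1}{2}\cdot(\text{parity of } L)$ equals $\frac{L(x)}{2}$ modulo $1$, and $L$ has degree $O(1)$. The $\modg_q$ part is the problem, since $q \neq 2$ and parity cannot see it directly. Following the construction behind~\cite[Corollary 20]{bhrushundi2019torus}, I would build a real polynomial $Q(x)$ of degree $O(\log n)$ with $|Q(x) - \modg_q(L(x))| \le \delta$ for every $x$. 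The natural route is via $\modg_q(L) = 1 - \frac{1}{q}\sum_{k=0}^{q-1} e^{2\pi i k L/q}$, written as a real combination of cosines of $2\pi k L/q$. Because $|L| \le \poly(n)$, each cosine can be replaced by a truncated Taylor series in $L$ of degree $O(\log(n/\delta))$. This gives total degree $O(\log n)$ whenever $\delta = 1/\poly(n)$. With $Q$ in hand, the gate is approximated by
\begin{equation*}
g(x) \approx 1 - \bigl(1-\pi(x)\bigr)\bigl(1-Q(x)\bigr),
\end{equation*}
where $\pi(x) \in \{0,1\}$ is the parity of $L$.

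\textbf{Step 2 (removing the non-polynomial factor).} The expression above still contains $\pi(x)$, which is not low-degree over the reals. It only has to be handled modulo $1$ after the final division by $2$, so I would expand:
\begin{equation*}
\tfrac{1}{2}g = \tfrac{1}{2}\bigl(\pi + Q - \pi Q\bigr).
\end{equation*}
The term $\frac{\pi}{2}$ equals $\frac{L}{2}$ modulo $1$. The term $\frac{\pi Q}{2}$ is harder: $Q$ is only approximately $0/1$, so I cannot replace $\pi$ by $L$ inside it. I would first round $Q$ toward $\{0,1\}$; since $\modg_q(L)$ is Boolean, composing $Q$ with a low-degree amplifying polynomial keeps the degree at $O(\log n)$. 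I would then use the identity that, for Boolean $b$ and integer $L$, $\frac{\pi b}{2} \equiv \frac{L b}{2} \pmod 1$. This holds exactly when $b \in \{0,1\}$. With $b$ replaced by $Q$ it holds up to an error of $|L|\cdot\delta/2$, which is small provided $\delta \le 1/(n^{c}\,|L|_{\max})$, still $1/\poly(n)$. The result is a polynomial $P_g = \frac{1}{2}\bigl(L + Q - LQ\bigr)$ of degree $O(\log n)$ whose fractional part is within $\poly(n)\cdot\delta$ of $g/2$.

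\textbf{Step 3 (summing over the top gate).} The top $\modg_2$ gate computes $\sum_g g \bmod 2$, so I set $P = \sum_g P_g$. Fractional parts add modulo $1$, so the errors add. With $\poly(n)$ gates and per-gate error $\poly(n)\,\delta$, choosing $\delta = n^{-c}$ for a large constant $c$ keeps the total error below $\frac{1}{\Omega(n)}$, at degree $O(\log n)$. The main obstacle is the error accounting in Step 2: the factor $L$ multiplying the approximation error of $Q$ inflates it by $\poly(n)$, and this has to be absorbed by the choice of $\delta$ without pushing the degree past $O(\log n)$. This works because the Taylor truncation depends only logarithmically on $1/\delta$.
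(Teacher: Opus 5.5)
\textbf{Step~1 fails, and the rest of the argument depends on it.} You ask for a real polynomial \(Q\) of degree \(O(\log n)\) with \(\abs{Q(x)-\modg_q(L(x))}\le\delta\) for every \(x\), and no such polynomial exists.

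Take all bottom \(\AND\) gates of fan-in one, so that \(L=\sum_i x_i\). Symmetrizing \(Q\) gives a univariate polynomial of the same degree that approximates \(\modg_q(k)\) for \(k\in\{0,\dots,n\}\). This function changes value within \(O(q)\) of \(n/2\), so Paturi's theorem forces degree \(\Omega(n)\) already for \(\delta=1/3\).

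Your Taylor argument is where this slips through. On \(\abs{L}\le N\), the truncation error of the series for \(\cos(2\pi kL/q)\) is about \((2\pi kN/q)^D/D!\). That is small only once \(D\) exceeds a constant multiple of \(kN/q\), so the degree is linear in \(N\ge n\), not \(O(\log(N/\delta))\). This is exactly the obstruction torus polynomials are designed to bypass: at low degree, \(\modg_q\) can be approximated only modulo \(1\), never pointwise over \(\mathbb{R}\). So the real difficulty is not the error bookkeeping you flag in Step~2 but the existence of \(Q\) itself. Amplification cannot help, since there is nothing of low degree to amplify.

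\textbf{The missing ingredient} is the digit-transfer construction of Lemma~\ref{lem:modtorus} (Lemma~\ref{lem:alphatorus} with \(p=2\)). Start from an integer polynomial representing \(\modg_q\) modulo \(q^{\ell'}\) (Fermat--Euler plus modulus amplification) and multiply by \(w\lceil 2^{\ell}/q^{\ell'}\rceil+2^{\ell}\). This gives an \emph{integer} polynomial \(R\) of degree \(O(qe)\) with \(R\equiv b\,2^{\ell}+E \pmod{2^{\ell+e}}\) and \(0\le E\le 2^{\ell-e}\), where \(b=\modg_q(L)\).

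Substituting \(R/2^{\ell+1}\) for \(Q\) rescues your skeleton:
\begin{itemize}
\item Because \(1-L\) is an integer and \(e\ge 1\), we have \((1-L)R/2^{\ell+1}\equiv (1-L)b/2+(1-L)E/2^{\ell+1} \pmod 1\).
\item Hence \(P_g=\frac{L}{2}+\frac{(1-L)R}{2^{\ell+1}}\) is within \(\poly(n)\cdot 2^{-e}\) of \(\frac{1}{2}(L+b-Lb)\equiv \frac{g}{2}\) modulo \(1\).
\item Choosing \(e=O(\log n)\) keeps both the summed error over \(\poly(n)\) gates and the degree under control.
\end{itemize}

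\textbf{Comparison with the paper.} The paper organizes the argument differently:
\begin{itemize}
\item It splits each \(\modg_{2q}\) gate into its \(\modg_2\) and \(\modg_q\) parts and pushes all \(\modg_2\) logic into the top two layers.
\item It represents that part exactly by a constant-degree polynomial modulo \(2\).
\item It rewrites each resulting monomial as a single \(\modg_q\circ\AND_{O(1)}\) gate and applies Lemma~\ref{lem:alphatorus} to each.
\end{itemize}
Your gate-by-gate version reads the parity off exactly as \(L/2\) and handles the cross term through \(Lb/2\), which avoids that restructuring. But it only works once the \(\modg_q\) part is approximated modulo \(1\) rather than over \(\mathbb{R}\).
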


This result suggests a potential approach for proving that \(\modg_2 \circ \modm \circ \AND_{O(1)}\) circuits, for an even semiprime \(n\), require superpolynomial size to compute \(\AND_n\).
Using~\cite[Theorem 6]{krishan2026lower}, we get that any torus polynomial approximating \(\AND\) within \(\frac{1}{\Omega(n)}\) error must have degree \(\Omega(\log(n))\).
The authors note a gap between this lower bound and the known upper bound for \(\AND\), and suggest bridging this gap in~\cite[Open Problem 4]{krishan2026lower}, hinting at a possibility of improving the lower bound.
Using Theorem~\ref{thm:2cdhapprox}, improving the degree lower bound to \(\omega(\log(n))\) implies the superpolynomial size lower bound we are aiming for.
In the appendix, we present evidence that there is room for improvement in the lower bound argument from~\cite[Theorem 6]{krishan2026lower}.
Hence, we conjecture that the lower bound is indeed stronger, see Conjecture~\ref{conj:genandlb} for a more general statement.

\begin{conjecture}
    \label{conj:2andlb}
    Any torus polynomial that approximates \(\AND_n\) within \(\frac{1}{20n}\) error requires degree \(\omega(\log(n))\).
\end{conjecture}

\noindent {\bf Organization of the Paper:}
We introduce some notation and preliminary statements required for our results in Section~\ref{sec:prelims}.
Section~\ref{sec:symlb} is dedicated to proving Theorem~\ref{thm:symcczlb}, and its extension to nested block symmetry.
In Section~\ref{sec:cdhapproach}, we describe our proposed approach for making progress on the constant degree hypothesis by improving lower bounds for torus polynomials approximating \(\AND\).
Finally, in Section~\ref{sec:periodub}, we prove Theorem~\ref{introthm:periodapprox} about symmetric torus polynomials approximating periodic functions.

\section{Preliminaries}
\label{sec:prelims}

We start by formally defining the notion of symmetry for circuits.

\begin{definition}[Symmetric Circuits]
For a subgroup \(\Gamma \leq S_n\), a circuit \(C\) over variables \((x_1, \ldots, x_n)\) is called to be \(\Gamma\)-symmetric if it satisfies the following property.
For each permutation \(\pi \in \Gamma\), there is a permutation \(\pi'\) over the gates in \(C\) that extends \(\pi\), that is \(\pi'(x_i) = \pi(x_i)\) for each \(i \in [n]\), such that the following holds:
A gate \(g\) is connected to another gate \(g'\) in \(C\) if and only if \(\pi'(g)\) is connected to \(\pi'(g')\) in \(C\).

If \(\Gamma = S_n\), the circuit is called symmetric.
\end{definition}

The circuits we consider are composed of \(\modm\) gates, defined as follows.
\[
\modm(x_1, \ldots, x_n) = 
\begin{cases}
1 & \sum_i {x_i} \mod m = 0 \\
0 & \sum_i {x_i} \mod m \neq 1
\end{cases}
\]

\noindent {\bf Torus Polynomials:}
The main tool for our results is approximation by torus polynomials, introduced in~\cite{bhrushundi2019torus}. We define it formally below.

\begin{definition}[Torus Polynomials (\cite{bhrushundi2019torus})]
Fix a natural number \(n\), and a Boolean function \(f : \{0, 1\}^n \to \{0, 1\}\).
Consider a polynomial \(P \in \mathbb{R}[x_1, \ldots, x_n]\).
We define \(P\) as a \emph{torus polynomial} approximating \(f\) within an error of \(\varepsilon\), for some \(0 \leq \varepsilon < \frac{1}{4}\), if the following holds for some integer function \(Z : \{0, 1\}^n \to \mathbb{Z}\):
For each \(a \in \{0, 1\}^n\), \(P(a)\) belongs to the interval \(P(a) \in [Z(a) + f(a)/2 - \varepsilon, Z(a) + f(a)/2 + \varepsilon]\).

In other words, the \emph{fractional part} of \(P(a)\) is at most \(\varepsilon\) away from \(\frac{f(a)}{2}\).
\end{definition}

The proof of our main result crucially relies on symmetric torus polynomials, in which the approximating polynomial is symmetric, as formally defined below.

\begin{definition}[Symmetric Polynomial]
A polynomial \(P \in \mathbb{R}[x_1, \ldots, x_n]\) is symmetric if for any permutation \(\pi \in S_n\), \(P(x) = P(\pi \circ x)\) syntactically.
In other words, monomials of the same degree share the same coefficient.

In general, \(P\) is \(\Gamma\)-symmetric, for some subgroup \(\Gamma \leqslant S_n\), if \(P(x) = P(\pi \circ x)\) for any \(\pi \in \Gamma\).
In other words, monomials within the same orbit under \(\Gamma\) share the same coefficient.
\end{definition}

In~\cite[Theorem 7]{krishan2026lower}, the authors proved a lower bound on the degree of symmetric torus polynomials approximating the \(\AND\) function.
We state this result, a key component in our proof, below.

\begin{theorem}[\cite{krishan2026lower}]
\label{thm:symtoruslb}
Any symmetric torus polynomial that approximates \(\AND_n\) within \(\frac{1}{20n}\) error must have degree \(\widetilde{\Omega}\rb*{\sqrt{n}}\).
\end{theorem}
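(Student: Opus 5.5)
The plan is to reduce to a univariate problem and then argue by finite differences, with the Coppersmith--Rivlin inequality as the analytic ingredient. The reduction is standard. If \(P\) is symmetric of degree \(d\), then on \(\{0,1\}^n\) it depends only on the Hamming weight, and \(P(x)=\sum_{j\le d} c_j\binom{|x|}{j}=:p(|x|)\), where \(p\) is a univariate real polynomial of degree at most \(d\). The approximation condition for \(\AND_n\) with \(\varepsilon=\frac{1}{20n}\) then reads
\[
p(k)=Z(k)+e(k) \ (0\le k\le n-1),\qquad p(n)=Z(n)+\tfrac12+e(n),
\]
with \(Z(k)\in\mathbb{Z}\) and \(|e(k)|\le\varepsilon\). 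So it suffices to show that no real polynomial of degree \(d=o(\sqrt{n}/\mathrm{polylog}\, n)\) is \(\varepsilon\)-close to integers on \(\{0,\dots,n-1\}\) and simultaneously \(\varepsilon\)-close to a half-integer at \(n\).

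\textbf{Step 1 (rigidity on the first \(n\) points).} Every \((d+1)\)-st forward difference of \(p\) vanishes. On any window \(\{k,\dots,k+d+1\}\subseteq\{0,\dots,n-1\}\) this gives
\[
\Delta^{d+1}Z(k)=-\Delta^{d+1}e(k),\qquad |\Delta^{d+1}e(k)|\le 2^{d+1}\varepsilon .
\]
When \(d\lesssim\log n\) the right side is below \(1\), so \(Z\) agrees on \(\{0,\dots,n-1\}\) with an integer-valued polynomial \(q\) of degree at most \(d\). Then \(r:=p-q\) has degree at most \(d\), satisfies \(|r|\le\varepsilon\) on \(\{0,\dots,n-1\}\), and has \(r(n)\) within \(\varepsilon\) of a half-integer. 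Checking one window ending at \(n\) already contradicts \(2^{d+1}\varepsilon<\tfrac12\), so this step alone proves the \(\Omega(\log n)\) bound.

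\textbf{Step 2 (the main obstacle: getting \(\sqrt{n}\)).} To reach \(\widetilde{\Omega}(\sqrt n)\) I would avoid binomial differences, whose \(\ell_1\) norm \(2^{d+1}\) is what limits Step 1. Instead I would use an integer annihilator with much smaller \(\ell_1\) mass. Concretely, take the coefficient vector \(a\) of \((1-x)\prod_{i=1}^{d}(1-x^{b_i})\), possibly with repeated or structured step sizes \(b_i\) summing to at most \(n\). This vector kills all polynomials of degree at most \(d\), is integral (so the \(Z\) contributions vanish mod \(1\)), and should be arranged so that its coefficient at \(n\) is odd. The goal is that \(\varepsilon\|a\|_1<\tfrac12\), i.e.\ \(\|a\|_1<10n\).

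If the combinatorics does not cooperate, I would switch to an analytic route. The key input is the Coppersmith--Rivlin bound: a degree-\(d\) polynomial bounded by \(1\) on \(\{0,\dots,n-1\}\) is bounded by \(e^{O(d^2/n)}\) on \([0,n]\), and its derivatives are controlled at scale \(n/d^2\). I would apply it to a blockwise version of \(r\). Partition \(\{0,\dots,n-1\}\) into blocks of length about \(\log n\), and run Step 1 on each block to get local integer-valued interpolants. Coppersmith--Rivlin, applied to the differences between interpolants on neighbouring blocks, should force these interpolants to glue into one global polynomial \(q\) whenever \(d^2\ll n/\mathrm{polylog}\, n\). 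Once that gluing holds, the extrapolation to \(k=n\) moves \(r\) by at most a constant factor times \(\varepsilon\), far less than \(\tfrac12-2\varepsilon\). That is the contradiction.

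The gluing step is where I expect the real difficulty, and it is where the polylogarithmic loss in \(\widetilde{\Omega}(\sqrt n)\) should come from.
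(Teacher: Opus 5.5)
Your Step 1 is correct, but it only gives $\Omega(\log n)$; a single window ending at $n$ already suffices. Step 2, which is where the theorem actually lies, is not carried out. Worse, your analytic route rests on a false rigidity principle.

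Once $d>\log_2(20n)$, running Step 1 on blocks of length $\log n$ is vacuous: a block has fewer than $d+2$ points, and $2^{d+1}\varepsilon>1$ anyway. More seriously, near-integrality on $\{0,\dots,n-1\}$ does \emph{not} force the integer parts to agree with a low-degree integer-valued polynomial, even for $d=O(\log n)$. Concretely, pick $\ell$ with $3^\ell\ge 10n$ and a modulus-amplifying polynomial $A$ of degree $2\ell-1$, meaning $A(y)\equiv 0\pmod{3^\ell}$ when $y\equiv 0\pmod 3$ and $A(y)\equiv 1\pmod{3^\ell}$ when $y\equiv 1\pmod 3$. Set $\tilde p(k)=\frac{3^\ell+1}{3^\ell}A(1-k^2)$, which is twice a symmetric torus polynomial for $\modg_3$. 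It has degree $O(\log n)$. For every $k$ it lies within $3^{-\ell}\le\frac{1}{10n}$ of an integer $Z'(k)$, and $Z'(k)\equiv[3\mid k]\pmod 2$. An integer-valued polynomial of degree $D\ge 1$ is periodic mod $2$ with a power-of-two period at most $2D$. Hence $Z'$ agrees on $\{0,\dots,n-1\}$ with no polynomial of degree below roughly $n/2$. So there is no global $q$ to glue into, and Coppersmith--Rivlin never gets a low-degree $r=p-q$ to act on. The obstruction at $k=n$ has to come from a global dual certificate, not from rigidity of $Z$.

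Your first route is exactly such a certificate. It is the null-space-witness method the paper attributes to \cite{krishan2026lower}; the paper itself only imports this theorem and gives no proof of its own. An integral $\gamma$ orthogonal to all polynomials of degree $\le d$ with $\gamma_n$ odd gives $\ip{Z+f/2,\gamma}\in\mathbb{Z}+\frac12$ for \emph{every} $Z$, so $\varepsilon\norm{\gamma}_1<\frac12$ finishes the argument. What is missing is the existence of such a $\gamma$, and your product ansatz does not supply it. Dyadic $b_i$ have $\sum_i b_i=2^d-1$, so they only reach $d\approx\log_2 n$, which is Step 1 again. The choice $b_i=i$ gives an $\ell_1$ norm exponentially large in $d$, since $\norm{a}_1$ dominates the sup norm of $(1-z)\prod_{i\le d}(1-z^i)$ on the unit circle.

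The fix is to drop the product structure and use pigeonhole. The $2^{n+1}$ subsets $S\subseteq\{0,\dots,n\}$ have moment vectors $(\sum_{k\in S}k^j)_{j=0}^{d}$ taking at most $(n+2)^{(d+1)(d+2)/2}$ values. So if $(d+1)(d+2)<2(n+1)/\log_2(n+2)$, two distinct subsets collide. Their difference $a\in\{-1,0,1\}^{n+1}$ annihilates all polynomials of degree $\le d$. Translating it so that its last nonzero entry sits at index $n$ preserves this, gives $a_n=\pm1$, and keeps $\norm{a}_1\le n+1<10n$. Your Step 1 computation with $a$ in place of the binomial difference then yields a contradiction, so $d=\Omega(\sqrt{n/\log n})=\widetilde{\Omega}(\sqrt n)$.
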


\section{Size Lower Bounds for Symmetric $\ccz$ Circuits}
\label{sec:symlb}

The proof starts by converting a symmetric \(\ccz\) circuit to a symmetric torus polynomial that approximates it within \(\frac{1}{20n}\) error.
Then, we use Theorem~\ref{thm:symtoruslb}~\cite{krishan2026lower}, which allows us to prove the size lower bound.
Formally, the conversion result is as follows.

\begin{lemma}
\label{lem:symccztorus}
Consider any symmetric \(\ccz\) circuit of depth \(h\) and size \(s\).
Then, there exists a symmetric torus polynomial \(P\) that approximates it within \(\frac{1}{20n}\) error, such that \(\mathop{deg}(P) \leq \log^{O(h)}(s)\).
\end{lemma}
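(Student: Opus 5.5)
The plan is to follow the construction of~\cite[Corollary 20]{bhrushundi2019torus}, which turns a $\ccz$ circuit into a torus polynomial approximating it. The only extra work is to check, gate by gate, that every choice in that construction can be made \emph{canonically}. By canonically we mean that the polynomial attached to a gate depends only on the isomorphism type of the sub-circuit below it, and not on variable names.

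First, fix the symmetric circuit $C$. For each $\pi \in S_n$, fix a gate permutation $\pi'$ as in the definition of symmetric circuits. Proceeding bottom-up, attach to each gate $g$ a polynomial $Q_g$ (over $\mathbb{R}$, or rather a torus/integer-valued representation). We maintain the invariant
\[
Q_{\pi'(g)}(x) = Q_g(\pi^{-1}\circ x) \quad \text{for all } \pi \in S_n .
\]
At the output gate, which is fixed by every $\pi'$, this invariant gives $Q_{\mathrm{out}}(x)=Q_{\mathrm{out}}(\pi\circ x)$, i.e.\ a symmetric polynomial.

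The base case is clear, since $Q_{x_i}=x_i$. For a $\modm$ gate $g$ with children $g_1,\dots,g_k$, the BHLR step works as follows. Write $m=\prod p_j^{e_j}$, and express $\modm$ through the $\modg_{p_j^{e_j}}$ gates. Approximate each of these by a low-degree integer polynomial, applied to the sum $\sum_i Q_{g_i}$. Then use probabilistic/amplification tricks (random restrictions or error reduction to $1/(20n)$) to convert everything into a torus polynomial of degree $\mathrm{polylog}(s)$ per layer, so that the total degree is $\log^{O(h)}(s)$.

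Two features of this step matter for symmetry. First, the children enter only through the multiset $\{Q_{g_i}\}$ via a symmetric function (their sum, or elementary symmetric combinations of it). Since $\pi'$ permutes the children of $g$ onto the children of $\pi'(g)$, the invariant propagates, provided the outer univariate or low-degree map is fixed independently of $g$ (it depends only on $m$, $s$ and $n$). Second, whenever the original construction uses randomness, for instance a random subset or random coefficients for error amplification, we replace it by an average over the randomness. Such an average is still a valid approximation only for real-valued approximations, not for the mod-1 condition.

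The main obstacle is exactly this second point. The BHLR construction typically uses the probabilistic method together with a union bound, and then \emph{fixes} one good random choice. A fixed choice breaks symmetry, and averaging fractional parts does not preserve torus approximation. My first attempt will be to trace~\cite[Corollary 20]{bhrushundi2019torus} and verify that the randomness is only needed for the $\AND/\OR$ gates of $\acc$. For pure $\modm$ gates with prime-power moduli, one can instead use deterministic exact representations: for example, $\modg_{p^e}$ of an integer sum is computed exactly by a fixed polynomial via Lucas-type binomial identities $\binom{y}{p^j}$ modulo $p$. The only approximation then happens in converting the final mod-$p$ / CRT combination into a fractional part. This is done by a fixed univariate real polynomial applied to integer-valued values, for instance dividing by $p$ and using low-degree approximations that are deterministic, so the degree blow-up per layer is $\mathrm{polylog}(s)$.

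If some step genuinely needs randomness, the fallback is to symmetrize by averaging over $S_n$ \emph{before} reducing mod $1$ at intermediate layers, where values are real approximations of integers. That is, we keep intermediate polynomials as approximate integer-valued representations, where averaging is harmless because the error bounds are linear. We reduce to a torus approximation only at the top gate, which is already symmetric. With these checks in place, the degree bound $\log^{O(h)}(s)$ and error $\frac{1}{20n}$ are inherited directly from the original construction.
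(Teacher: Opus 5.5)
Your symmetry mechanism is the one the paper uses. Averaging over \(S_n\) destroys the integrality of the integer parts, so one shows instead that the construction is canonical, i.e.\ invariant under edge-preserving bijections between circuits (Lemma~\ref{lem:permtorus}). You also rightly note that the polynomial used for a \(\modq\) gate is symmetric in its inputs.

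\textbf{The gap.} The construction you propose to make canonical does not work as described. A bottom-up scheme that attaches \(Q_g\) to each gate and applies a fixed low-degree polynomial to \(\sum_i Q_{g_i}\) needs the children's polynomials to equal, or be close to, their \(0/1\) values over the integers. But the children of a \(\modq\) gate are \(\modg_{q'}\) gates for a different prime. Their exact and approximate real degrees are linear in their fan-in, and a representation modulo \(q'\) (or a torus approximation) cannot be substituted into a mod-\(q\) polynomial. This is exactly the composition barrier that torus polynomials are designed to sidestep. Neither ``probabilistic/amplification tricks'' nor ``dividing by \(p\)'' bridges it: dividing a mod-\(q\) representation by \(q\) gives fractional part \(\modq(x)/q\), not \(\modq(x)/2\), and it still does not compose across layers. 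Your averaging fallback fails for the same reason: there are no low-degree, approximately integer-valued representations of intermediate modular gates to average. As you suspected, no randomness is needed in the \(\ccz\) case.

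\textbf{The missing idea} is the top-down recursion of~\cite{bhrushundi2019torus}.
\begin{enumerate}
\item Convert the circuit into a layered modular circuit. Replace each \(\modm\) gate by prime-power gates so that each layer uses a single prime power, and push all \(\AND\) gates to the bottom via \(\AND\circ\modq\to\modq\circ\AND\). You must also check that this conversion preserves symmetry (Lemma~\ref{lem:symcczconvert}).
\item For the top \(\modq\) gate, take an integer polynomial \(Q=\sum_j c_jM_j\) of degree \(O(qe)\) in its Boolean inputs, with \(Q\equiv\modq\cdot 2^{\ell}+E \pmod{2^{\ell+e}}\) and \(0\le E\le 2^{\ell-e}\). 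This comes from Fermat--Euler plus modulus amplification, followed by multiplication by \(w\lceil 2^{\ell}/q^{\ell'}\rceil+2^{\ell}\) (Lemma~\ref{lem:modtorus}).
\item Each monomial \(M_j\) is an \(\AND\) of subcircuits. Pushing that \(\AND\) to the bottom gives a layered circuit \(D_j\) of smaller depth. Recursing with a larger error parameter \(e'=O(qe\log s)\) gives \(Q_j\equiv D_j2^{\ell'}+E_j \pmod{2^{\ell'+e'}}\).
\item This register encoding is additive, so \(\sum_j c_jQ_j\equiv C\,2^{\ell+\ell'}+(\text{small}) \pmod{2^{\ell+\ell'+e}}\). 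Dividing by \(2^{\ell+\ell'+1}\) gives the torus polynomial, and the degree grows by a polylogarithmic factor per layer.
\end{enumerate}

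Consequently, your invariant cannot be indexed by gates of the original circuit. What must be proved is isomorphism-invariance of the entire recursion. An automorphism permutes the monomials of \(Q\) and induces isomorphisms between the corresponding new circuits \(D_j\), and the induction hypothesis is applied to those.
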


\begin{proof}[Proof Sketch]
We start with a \(\ccz\) circuit \(C\) as considered in the statement.
Our plan for finding a torus polynomial that approximates this circuit proceeds in three steps.

\begin{enumerate}
\item Convert the \(\ccz\) circuit to a layered circuit such that \(\AND\) gates appear only at the bottom layer, and each layer consists of \(\modq\) gates for a single prime power \(q\).
The conversion is well-known, see~\cite{beigel1994acc} for reference.
We argue that this conversion can be performed while preserving symmetry, as shown in Lemma~\ref{lem:symcczconvert}.

\item Next, we use the torus polynomial construction from~\cite{bhrushundi2019torus} to obtain a torus polynomial that approximates \(C\) within \(\frac{1}{20n}\) error.
As our argument depends crucially on this construction, we describe it in Lemma~\ref{lem:symtorus}.
This yields a torus polynomial of degree \(\log^{O(h)}(s)\).

\item Finally, we argue that the conversion produces a symmetric polynomial when starting with a symmetric circuit.
We prove this formally in Lemma~\ref{lem:symtorus}.
\end{enumerate}

At the end, we obtain a symmetric torus polynomial that approximates the circuit, with the claimed degree.
\end{proof}

Using Lemma~\ref{lem:symccztorus}, we can complete the proof of our main result as follows.

\begin{proof}[Proof of Theorem~\ref{thm:symcczlb}]
Consider a symmetric \(\ccz\) circuit \(C\), with depth \(h\), size \(s\) and \(\AND\) gates of fan-in at most \(d = O(1)\) at the bottom.
Then, there exists a symmetric torus polynomial \(P\), of degree at most \(\log^{O(h)}(s)\) that approximates \(C\) within \(\frac{1}{20n}\) error.
Now, assume that \(C\) computes \(\AND_n\).
Using~\cite[Theorem 7]{krishan2026lower}, we get that \(\log^{O(h)}(s) \geq \widetilde{\Omega}\rb*{\sqrt{n}}\).
Hence, \(s(n) = 2^{\widetilde{\Omega}(n^{1/O(h)}))}\).
\end{proof}

\subsection{Converting \(\ccz\) circuits to Layered Form}

To start the proof of Lemma~\ref{lem:symccztorus}, we describe the procedure to convert a symmetric \(\ccz\) circuit into a symmetric \emph{layered modular circuit}.
We define layered modular circuits below.

\begin{definition}[Layered Modular Circuit]
   A circuit is called a layered modular circuit if it satisfies the following conditions:
   \begin{itemize}
       \item Each gate in the circuit is a \(\modg_q\) gate for some prime power \(q\), except possibly the bottom layer which can consist of \(\AND\) gates of constant fan-in.
       \item Each layer of the circuit uses \(\modg_q\) gates for a single \(q\).
       \item Different \(\modg_q\) gates use prime powers \(q\) for distinct primes.
   \end{itemize}
\end{definition}

We use a well-known procedure for this conversion, see~\cite{beigel1994acc} for reference.
Our main contribution is to argue that the conversion preserves symmetry.

\begin{lemma}
\label{lem:symcczconvert}
Consider a \(\ccz\) circuit of size \(s\) and depth \(h\).
Then, it can be transformed to an equivalent layered modular circuit of size \(\poly(s)\) and depth \(O(h)\).
Crucially, if the original circuit is symmetric, the produced circuit is symmetric as well.
\end{lemma}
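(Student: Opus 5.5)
I will follow the standard conversion of~\cite{beigel1994acc}. At each stage I will check that it is \emph{canonical}: every new gate is determined by a tuple of old gates (and constants or indices that do not depend on variable names), and every new wire is determined by incidence relations among old gates. Then any automorphism \(\pi'\) of the original DAG extending \(\pi \in S_n\) induces, by acting coordinatewise on these tuples, an automorphism of the new DAG. That induced automorphism still extends \(\pi\) on the input layer. Symmetry is therefore inherited automatically, and the work is to write each step so that it is visibly canonical in this sense.

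\textbf{Step 1: a single modulus per gate family.} Let \(m = \prod_j q_j\) with \(q_j\) prime powers of distinct primes. A \(\modm\) gate on inputs \(y\) outputs \(1\) iff \(\sum y \equiv 0 \pmod{q_j}\) for every \(j\). By the CRT it is the \(\AND\) of the \(\modg_{q_j}\) gates on the same inputs. I replace each gate \(g\) by the pairs \((g,j)\) together with an \(\AND\) node \((g,\wedge)\). The labels \(j\) are fixed by the arithmetic and not by \(\pi\), so \(\pi'\) acts by \((g,j)\mapsto(\pi'(g),j)\). This multiplies the size by a constant.

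\textbf{Step 2: removing internal \(\AND\)s and layering.} I will use two tools.
\begin{itemize}
\item The \(\AND\) of a constant number \(k\) of \(\modg_q\) gates with inputs \(z_1,\dots,z_k\) is a constant-size Boolean function of the sums \(\sum z_i \bmod q\). It can be written as a \(\modg_{q'}\) gate (for a suitable power \(q'\) of the same prime) over \(\AND\)s of the underlying inputs. This pushes \(\AND\) gates downward.
\item Any \(\AND\) of bounded fan-in sitting above a \(\modg\) layer can be expanded via the standard multilinearization: a product of sums becomes a sum of products. The new gates are indexed by tuples of wires, i.e.\ by tuples of old gates, so the construction is again canonical under \(\pi'\).
\end{itemize}
Repeating this layer by layer pushes all \(\AND\)s to the bottom with constant fan-in, at \(\poly(s)\) cost. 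The fan-in bound holds because each \(\AND\) created in Step~1 has fan-in at most the number of prime factors of \(m\), which is \(O(1)\).

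\textbf{Step 3: equalizing moduli within a layer.} Finally, I interleave layers so that each layer uses one prime power, by padding with identity gates of a fixed type. Concretely, I insert a chain of dummy fan-in-one gates, each computing \(\neg\neg\) via two \(\modg_q\) gates (using constants as needed), for each gate and each prime index. I also separate the layers by prime in a fixed cyclic order. Because the padding of a gate depends only on its depth and type, \(\pi'\) extends to it. This multiplies the depth by \(O(r) = O(1)\), so the depth stays \(O(h)\).

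The main obstacle is Step~2. The Beigel--Tarui style transformation must be carried out so that no choice of ordering among sibling gates is made. A formula such as \(\prod_i(\sum_k y_{ik})\) has to be expanded over all tuples rather than a chosen representative; otherwise \(\pi'\) may fail to lift. I will handle this by always indexing new gates by full multisets or tuples of old gates, accepting the resulting polynomial (constant-exponent) blowup. Gates that are duplicated because tuples are ordered can be kept as distinct gates, since duplicates do not hurt symmetry.
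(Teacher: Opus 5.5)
Your Step~2 has a genuine gap: it pushes \(\AND\)s down \emph{before} the moduli are equalized. The \(\AND\) gates you push are those created in Step~1 (your fan-in bound refers to them), and each sits above \(\modg_{q_1},\dots,\modg_{q_r}\) gates for pairwise distinct primes. Tool (a) needs a common prime. The multilinearization in tool (b) only makes sense once every child is written as a polynomial over one common ring; a \(\modg\) gate does not output the sum of its inputs, so an \(\AND\) of \(\modg\) gates is not a ``product of sums''.

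This cannot be repaired inside Step~2. For primes \(p\neq q\) we have \(\AND(\modp(y),\modq(y))=\modg_{pq}(y)\). Any \(\modg_{p^a}\circ\AND_{O(1)}\) circuit, of any size, has a constant-degree representation over \(\mathbb{F}_p\): by Lucas' theorem, the low base-\(p\) digits of the weighted sum are elementary symmetric polynomials in the monomials. The function \(\modg_{pq}\) has no such representation, and the same holds over \(\mathbb{F}_q\). To see this, identify variables in blocks of \(p\); this turns \(\modg_{pq}\) into \(\modq\). A constant-degree symmetric \(\mathbb{F}_p\)-polynomial is periodic in Hamming weight with a power-of-\(p\) period, while \(\modq\) has period \(q\). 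The same obstruction recurs at every later level, because in your ordering the layer beneath a newly created \(\AND\) has not yet been made single-modulus.

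The missing idea is the paper's ordering: first make the circuit layered with one prime power per layer, and only then push \(\AND\)s down. The paper replaces each \(\modm\) gate by a depth-\((r+1)\) gadget whose top \(\AND\) has \(r\) children, all of type \(\modg_{q_1}\). One child is the genuine gate; the other \(r-1\) are fan-in-one pass-through gates relaying the \(\modg_{q_2},\dots,\modg_{q_r}\) outputs from lower layers. Consequently the switch \(\AND\circ\modq\to\modq\circ\AND\) is only ever applied to an \(\AND\) over a single modulus, and the monomial \(\AND\)s it creates again sit above a single-modulus layer.

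Alternatively, you could eliminate the Step-1 \(\AND\)s outright. An \(\AND\) of \(r\) bits equals a \(\modg_t\) gate on those bits together with \(t-r\) constant ones, for any prime power \(t>r\). After that, only your padding step is needed.

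With the steps reordered, your canonical-indexing argument for symmetry goes through, and it is a cleaner version of the paper's gate-by-gate construction of \(\pi''\). One further correction: each push multiplies the \(\AND\) fan-in by up to \(q\). So the final fan-in is \(q^{O(h)}\) rather than \(r\), and the size is \(s^{q^{O(h)}}\), which is polynomial only because \(h\) and \(q\) are constants.
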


\begin{proof}
Broadly speaking, the conversion involves the following three steps (see~\cite{beigel1994acc}):

\begin{itemize}
\item Simulate a \(\modm\) gate as an \(\AND\) gate over \(\modq\) gates, where each \(\modq\) uses a prime-power factor \(q\) of \(m\), and no two \(\modq\) gates use a power of the same prime.
\item Convert this to a layered circuit by introducing dummy \(\modg\) gates.
\item Bring all \(\AND\) gates to the bottom by switching \(\AND\) and \(\modg\) gates at each layer.
\end{itemize}

Now, we argue that each of these steps can be performed while maintaining symmetry.
This suffices for the proof of the statement.

First, consider a \(\modm\) gate \(g\), where \(m = \prod_{i=1}^f q_i\) such that each \(q_i\) is a prime power for a distinct prime, and denote its inputs by \((g_1, \ldots, g_s)\).
The gate \(g\) is replace by an \(\AND\) gate \(g_a\) over \(\modg_{q_i}\) gates \(g_{q_i}\), each of which take \((g_1, \ldots, g_s)\) as their inputs.
To see how symmetry is preserved, choose any permutation \(\pi \in S_n\), which extends to a permutation \(\pi'\) over the gates in the original circuit.
Apply the transformation described above to \(\pi'(g)\), which is connected to \((\pi'(g_1), \ldots, \pi'(g_s))\).
As \(\pi'(g)\) is also a \(\modm\) gate, it is replaced by an \(\AND\) gate \(g'_a\) over \(\modg_{q_i}\) gates \(g'_{q_i}\), each of which take \((\pi'(g_1), \ldots, \pi'(g_s))\) as their input.
Hence, the permutation \(\pi''\) over the gates in the new circuit maps \(\pi''(g_a) = g'_a\), \(\pi''(g_{q_i}) = g'_{q_i}\) and \(\pi''(g_i) = \pi'(g_i)\).
It is easy to check that \(\pi''\) is as required.

Second, to ensure that each layer contains \(\modq\) gates for a single prime power \(q\), we modify the transformation from the first step itself.
Instead of replacing the \(\modm\) gate with an \(\AND\) over \(\modg_{q_i}\) gates, we replace it with a depth \(f+1\) gadget.
The top gate in this gadget is an \(\AND\) gate.
Now, we order \(q_i\)s in their increasing order, and form the next \(f\) layers using \(\modg_{q_i}\) gates in the \(i\)\textsuperscript{th} layer.
In the first layer, the leftmost \(\modg_{q_1}\) gate has fan-in \(s\), its \(j\)\textsuperscript{th} input being a path consisting of \(\modg_{q_2}, \ldots, \modg_{q_f}\) gates with \(g_j\) as the input to the bottom \(\modg_{q_f}\) gate.
The remaining \(f-1\) gates on the right have fan-in \(1\) each, taking a single gate from the second layer as input, as described inductively.

In the \(i\)\textsuperscript{th} layer, we describe the leftmost \(\modg_{q_i}\) gate after ignoring the gates already introduced by the layers above.
The leftmost \(\modg_{q_i}\) gate has fan-in \(s\), its \(j\)\textsuperscript{th} input being a path consisting of \(\modg_{q_{i+1}}, \ldots, \modg_{q_f}\) gates with \(g_j\) as the input to the bottom \(\modg_{q_f}\) gate.
All the other gates to its right have fan-in \(1\), each taking a single gate from the \(i+1\)\textsuperscript{th} layer as its input.
This completes the description of the gadget.
Clearly, the gadget is equivalent to \(\modm(g_1, \ldots, g_s)\), and size \(O(f^2s) = O(s)\).
The argument for why it preserves symmetry is similar to the argument for the first step.

Finally, for the third step, we need to bring \(\AND\) gates to the bottom.
This requires multiple applications of a well-known operation~\cite{beigel1994acc} which switches from \(\AND \circ \modq\) to \(\modq \circ \AND\).
Consider an \(\AND\) gate, over \(\modq\) gates \((g_1, \ldots, g_\ell)\), with \((h_{i, j_1}, \ldots, h_{i, j_{f'}})\) as the inputs for \(g_i\).
Here, we assume without loss of generality that each \(g_i\) has the same fan-in, by inserting ``dummy'' \(0\) inputs wherever necessary. 
Now, represent each \(\modq\) gate \(g_i\) as a polynomial \(P_i\) over \(\mathbb{Z}_{q}\) such that \(g_i(h_{i, j_i}, \ldots, h_{i, j_{f'}}) = P_i\rb*{\sum_{j = j_1}^{j_{f'}} h_{i, j}}\).
Then, write the expression for \(\AND(g_1, \ldots, g_\ell)\) as \(\prod_{i=1}^\ell P_i\).
Expand the product to get a single polynomial \(P\), such that each monomial of \(P\) becomes an \(\AND\) gate, feeding into a single \(\modq\) gate.
This completes the description of the operation to switch from \(\AND \circ \modq\) to \(\modq \circ \AND\).

We argue that each such operation preserves symmetry.
Consider an \(\AND\) gate \(g\), over \(\modq\) gates \((g_1, \ldots, g_\ell)\), with \((h_{i, j_1}, \ldots, h_{i, j_{f'}})\) as the inputs for \(g_i\).
As the original circuit is symmetric, for any permutation \(\pi \in S_n\), we have an extended permutation \(\pi'\) such that \(\pi'(g)\) takes \((\pi'(g_1), \ldots, \pi'(g_\ell))\) as its inputs, with \((\pi'(h_{i, j_1}), \ldots, \pi'(h_{i, j_{f'}}))\) as the inputs for \(\pi'(g_i)\).
Now, the polynomial \(P_i\) obtained for \(g_i\) is the same as \(P'_i\) obtained for \(\pi'(g_i)\).
Hence, for each monomial obtained in \(\prod_{i=1}^{\ell} P_i\) of the form \(\prod_{(i, j) \in S} h_{i, j}\) for some subset \(S \subseteq [\ell] \times [j_1, \ldots, j_{f'}]\), there is a corresponding monomial \(\prod_{(i, j) \in S} \pi'(h_{i, j})\).
Therefore, we can find a new permutation \(\pi''\), that maps the \(\AND\) gate corresponding to \(\prod_{(i, j) \in S} h_{i, j}\) with the \(\AND\) gate corresponding to \(\prod_{(i, j) \in S} \pi'(h_{i, j})\).
It is easy to verify that the new permutation \(\pi''\) suffices to prove symmetry for the new circuit.

Finally, one can easily verify that the size of the transformed circuit is at most \(O(\poly(s))\), and the depth is \(O(h)\).
This completes the proof.
\end{proof}

\subsection{Torus Polynomial Approximations for Layered $\ccz$ Circuits}
As the next step in the proof of Lemma~\ref{lem:symccztorus}, we describe the procedure to obtain a torus polynomial approximating a layered circuit.

\begin{lemma}
\label{lem:ccztorus}
Consider a layered modular circuit of depth \(h\) and size \(s\).
Then, there exists a torus polynomial \(P\) that approximates it within \(\frac{1}{20n}\) error, such that \(\mathop{deg}(P) \leq \log^{O(h)}(s)\).
\end{lemma}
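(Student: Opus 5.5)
We build the torus polynomial bottom-up, layer by layer, following the construction of~\cite[Corollary 20]{bhrushundi2019torus}. The invariant we maintain is that every gate $g$ at layer $i$ gets a real polynomial $P_g$ of degree at most $\log^{O(i)}(s)$ with $|P_g(a) - g(a)| \leq \delta_i$ for all $a \in \{0,1\}^n$. This is a genuine $\ell_\infty$ approximation of the Boolean value, which is stronger than a torus approximation. Only at the top do we pass to torus form: if the output gate satisfies $|P(a)-C(a)|\le \delta$, then $P/2$ approximates $C$ torus-wise within $\delta/2$, with $Z\equiv 0$. So it suffices to reach $\delta \le \frac{1}{10n}$ at the top. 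The base case is easy: each bottom $\AND$ gate of constant fan-in is exactly a monomial of degree $O(1)$.

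For the inductive step, take a $\modg_q$ gate with $q=p^k$ and inputs $g_1,\dots,g_\ell$, where $\ell\le s$. We use the standard fact that $\modg_{p^k}$ of Boolean inputs can be written as $1-\prod_{j<k}\bigl(1-\binom{S}{p^j}^{p-1}\bmod p\bigr)$ with $S=\sum g_j$. More usefully, the indicator of $p^k \mid S$ can be computed exactly by an integer polynomial in the elementary symmetric functions $\binom{S}{p^j}$, reduced modulo $p$. We cannot use that exact representation as is, because a degree-$\poly(s)$ real interpolation is too large.

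Instead, we follow the torus approach. The key point is that the value $\frac{1}{q}\binom{S}{\cdot}$-type quantities and $\frac{S}{q} \bmod 1$ are themselves torus-type quantities. The indicator $[q \mid S]$ equals $\frac1q\sum_{r=0}^{q-1} e^{2\pi i rS/q}$. We approximate the map $y \mapsto [\,y \equiv 0 \pmod 1\,]$ on the grid $\{0,\frac1q,\dots,\frac{q-1}{q}\}+\mathbb{Z}$ by a low-degree polynomial. This map is periodic, so we cannot approximate it directly over an unbounded range.

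To handle this, we use the trick from~\cite{bhrushundi2019torus}. We compute $S \bmod q$ exactly as an integer-valued polynomial of degree $q-1$ in the $\binom{S}{p^j}$, with digits obtained via Lucas' theorem. Each $\binom{S}{t}$ for $t<q$ is a polynomial of degree $t=O(1)$ in the Boolean inputs $g_j$. We then approximate it by substituting the $P_{g_j}$. Substituting approximations into the degree-$O(1)$ integer polynomial amplifies the error by at most $s^{O(1)}$. The result is a real number within $s^{O(1)}\delta_{i-1}$ of an integer $r\in[0,q)$ that determines the gate output, up to reduction mod $p$ of each digit.

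Finally, we compose with a univariate polynomial $Q$ that sends near-integers to the gate value. We need the digit reductions modulo $p$, which are needed because the binomials are unbounded, to be handled the same way. This is the delicate point. The plan is to apply a univariate polynomial of degree $\poly(\log s)$ that approximates the function ``integer $\mapsto$ residue mod $p$'' on $[0,s^{O(1)}]$. This exists only in the torus sense, so we move the torus wraparound down to this stage. Concretely, we write the gate output as the fractional part of $\sum_j c_j\binom{S}{p^j}^{\,\cdot}/p$ plus constants, and then clean the result to a Boolean value by composing with a robust amplifier of degree $O(\log(1/\delta))$ that approximates a step function on neighborhoods of the allowed values.

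The degree multiplies by $\log^{O(1)}(s)$ per layer, and the error is reset to $\delta$ each time at a cost of degree $O(\log(s/\delta))$. Choosing $\delta=1/(20n\,s^{O(1)})$ gives total degree $\log^{O(h)}(s)$ over $h$ layers. The main obstacle is exactly this cleaning-and-mod-$p$ step: turning an approximately integer value of size $s^{O(1)}$ into its residue using only a polylog-degree polynomial. This is where~\cite{bhrushundi2019torus} exploits that real approximation of ``integer mod $p$'' is impossible at low degree, while the torus version is possible. If the direct route fails, the fallback is to invoke their Corollary~20 as a black box for each layer, which is valid because the layered form guarantees a single prime power per layer.
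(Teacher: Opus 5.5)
Your induction breaks at the first \(\modg\) layer, no matter how the inductive step is implemented, because the invariant itself is false. You require every gate to have a real polynomial of degree \(\log^{O(i)}(s)\) within pointwise error \(\delta_i\) of its Boolean value. But a \(\modg_2\) gate over fan-in-\(1\) bottom gates computes parity up to complement. Any real polynomial within pointwise error \(<\frac12\) of parity has degree \(n\), since subtracting \(\frac12\) gives a sign-representation of parity.

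In your argument the failure appears at the ``cleaning'' step. After you pass to torus form, a value is known only as \(Z(a)+r/p\pm\delta\), where the integer part \(Z(a)\) is uncontrolled and large, because it comes from the unbounded binomials \(\binom{S}{p^j}\). Already at the first layer, the pre-cleaning value is a fixed constant-degree polynomial in \(S\). The gate value, however, alternates \(\Omega(\text{fan-in}/q)\) times as \(S\) increases. So any univariate cleaning polynomial needs degree \(\Omega(\text{fan-in}/q)\), which can be \(\Omega(n)\), not \(O(\log(1/\delta))\).

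Conceptually, if torus approximations could be cheaply turned back into real ones, parity would have small approximate degree, since it has the degree-\(1\) torus polynomial \(\frac12\sum_i x_i\). So once you move to the torus you must stay there up to the output. Then you can no longer substitute approximations of the inputs into a nonlinear gate polynomial: the fractional part of a product depends on the unknown integer parts of the factors.

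The missing idea, which the paper uses via Lemma~\ref{lem:cczintpoly}, is never to approximate intermediate gate values. Instead, build an exact integer polynomial whose binary expansion carries the output in a fixed digit: \(Q(a)\equiv C(a)2^{\ell}+E(a)\pmod{2^{\ell+e}}\) with \(0\le E(a)\le 2^{\ell-e}\). Divide by \(2^{\ell+1}\) only at the very end.

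The single-gate gadget (Lemma~\ref{lem:modtorus}) starts from exactly the representation you discarded. Fermat--Euler plus modulus amplification gives a low-degree integer polynomial equal to \(\modq\) modulo \(q^{\ell'}\). It is low degree precisely because it only needs to be correct modulo a prime power, not over \(\mathbb{R}\). Multiplying by \(w\lceil 2^{\ell}/q^{\ell'}\rceil+2^{\ell}\) then turns the unknown multiple of \(q^{\ell'}\) into something that vanishes modulo \(2^{\ell+e}\), up to the small error \(E\).

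Composition through a layer avoids multiplying encodings altogether:
\begin{enumerate}
\item Expand the top gate's polynomial into its \(t'=s^{O(qe)}\) monomials, and regard each monomial as an \(\AND\) of subcircuits.
\item Push that \(\AND\) to the bottom using the layered form, switching \(\AND\circ\modq\) to \(\modq\circ\AND\).
\item Apply the induction hypothesis separately to each resulting depth-\(h\) circuit \(D_j\), with precision \(e'\) of order \(\log t'\).
\end{enumerate}
Only the sum \(\sum_j Q_j\) is then needed. This is linear, so it is compatible with reduction modulo \(2^{\ell+\ell'+e}\), and the choice of \(e'\) absorbs the accumulated low-order errors.

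Your fallback does not repair the gap. Torus approximations of one layer cannot be fed as inputs to the next, so applying \cite[Corollary 20]{bhrushundi2019torus} ``per layer'' does not compose. Applying it to the whole circuit just cites the statement being proved.
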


The proof of this statement is implicit in~\cite{bhrushundi2019torus}; however, we reproduce the details of the proof as they are crucial for our argument.
As the first step, given a layered modular circuit \(C\), we construct an integer polynomial \(Q\), such that the binary expansion of \(Q(a)\) contains the value of \(C(a)\) at a fixed index.
Formally, we prove the following.

\begin{lemma}
\label{lem:cczintpoly}
Consider a layered circuit of depth \(h\) and size \(s\).
Then for any \(e \geq 1\), there exists an integer polynomial \(Q\) of degree \((e\log(s))^{O(h)}\) such that the following holds for some \(\ell \geq e\):
\[\forall a \in \{0, 1\}^n, Q(a) = C(a) 2^{\ell} + E(a) \mod 2^{\ell + e}\]
where \(E(a) \leq 2^{\ell - e}\).
\end{lemma}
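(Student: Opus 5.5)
We build \(Q\) by induction on the layers of the circuit, bottom to top. The invariant is that every gate \(g\) has an integer polynomial \(Q_g\) that stores \(g(a)\) at a fixed bit position \(\ell_g\) modulo \(2^{\ell_g+e_g}\), with error term below \(2^{\ell_g-e_g}\). This is the construction behind~\cite[Corollary 20]{bhrushundi2019torus}. The precision parameter \(e_g\) is chosen larger at lower layers, so that we can afford to lose precision as we move up.

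\emph{Base case.} Each bottom-layer \(\AND\) gate of constant fan-in is an exact monomial. For a target precision \(e'\) we set \(Q_g = 2^{\ell} \prod_{i\in S} x_i\) with \(\ell = e'\). This has degree \(O(1)\) and error \(0\).

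\emph{Inductive step.} Consider a \(\modg_q\) gate with \(q = p^k\) and inputs \(g_1,\dots,g_t\), \(t\le s\). Its output depends only on \(\sum_j g_j \bmod q\).
\begin{enumerate}
\item First we extract the input bits. Raise the precision by multiplying by a suitable power of \(2\), and align all inputs to a common position \(\ell'\). Summing gives \(\sum_j Q_{g_j} = 2^{\ell'}\sum_j g_j + \sum_j E_j \pmod{2^{\ell'+e'}}\), where the total error is at most \(s\,2^{\ell'-e'}\).
\item The value we need is the integer \(S=\sum_j g_j \le s\), that is, the bits of \(2^{\ell'}S\) at positions \(\ell',\dots,\ell'+\log s\). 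We take \(e' \ge \log s + e + O(1)\), so that the block of \(\log s\) significant bits is clean modulo \(2^{\ell'+e'}\). The small error remains below position \(\ell'\).
\item Next we compute the \(\modg_q\) function \(\phi(S)\in\{0,1\}\) of this block and place it at a new position \(\ell\). Following~\cite{bhrushundi2019torus}, we use the fact that bit extraction and arithmetic on \(O(\log s)\)-bit integers modulo powers of \(2\) can be done by low-degree integer polynomials. Concretely, the map \(y\mapsto \lfloor y/2^{\ell'}\rfloor \bmod 2^{\log s}\) for \(y\) known modulo \(2^{\ell'+e'}\) with a small error can be realized by an integer polynomial of degree \(\poly(e',\log s)\) in \(y\). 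After that we compose with an interpolating polynomial for \(\phi\) on \([0,s]\), whose coefficients are scaled by powers of \(2\) so that they become integral. One route is to write \(\phi(S)=1-\binom{S}{q}\)-style expressions via Lucas-type identities for \(p\)-adic digits. The cleaner route, which we would try first, is the BHLR lemma: for any function \(\phi\) of an \(O(\log s)\)-bit integer there is an integer polynomial of degree \(\poly(\log s, e)\) in \(y\) that outputs \(2^\ell\phi + E\) with \(|E|\le 2^{\ell-e}\) modulo \(2^{\ell+e}\).
\item Composing \(Q_g = R(\sum_j Q_{g_j})\) multiplies the degree by \(\deg R = (e\log s)^{O(1)}\).
\end{enumerate}

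\emph{Parameters.} Requiring precision \(e_{i+1}\) at layer \(i+1\) forces \(e_i = O(e_{i+1}+\log s)\) one layer below. Over \(h\) layers this keeps every \(e_i \le O(h)(e+\log s)\) at worst, and a polynomial blowup is also fine. The degrees then multiply to \(\prod_i (e_i\log s)^{O(1)} = (e\log s)^{O(h)}\), as claimed. The output gate gives \(Q\) with the stated form, and \(\ell\ge e\) by construction.

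\emph{Main obstacle.} The hard step is step 3: error-tolerant extraction of the \(\log s\)-bit block and evaluation of \(\phi\) by an integer polynomial, all modulo powers of \(2\). The difficulty is that the carry from the error could corrupt the block, and that division by non-powers of \(2\) is forbidden in the interpolation. We would handle the carry by guaranteeing an error margin of \(e\) bits below the block. We would handle integrality by using the fact that polynomials with coefficients in \(\mathbb{Z}[1/2]\)-free binomial bases (\(\binom{y}{k}\) times \(k!\)'s \(2\)-part) map integers to integers after multiplying by \(2^{O(k)}\) and shifting \(\ell\) accordingly; the odd part of \(k!\) is invertible modulo \(2^{\ell+e}\) and can be absorbed into integer coefficients.
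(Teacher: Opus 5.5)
Your Step~3 is a genuine gap, and the whole argument rests on it.

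The lemma you attribute to BHLR cannot hold in the generality you state. That lemma says: for any \(\phi\) of an \(O(\log s)\)-bit integer, there is a univariate integer polynomial \(R\) of degree \(\poly(\log s,e)\) turning an encoding of \(S\) at position \(\ell'\) into \(2^{\ell}\phi(S)+E\) modulo \(2^{\ell+e}\). Feed it the exact, error-free encoding \(y=2^{\ell'}\sum_i x_i\) with \(\phi(S)=[S=n]\) and \(e=\ceil{\log(10n)}\). Then \(R(2^{\ell'}\sum_i x_i)/2^{\ell+1}\) is a \emph{symmetric} torus polynomial of polylogarithmic degree approximating \(\AND_n\) within \(\frac{1}{20n}\). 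This contradicts Theorem~\ref{thm:symtoruslb}.

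Restricting to \(\phi=\modq\) does not save the step either.
\begin{itemize}
\item Lemma~\ref{lem:modtorus} is a polynomial in the \emph{exact} integer \(\sigma=\sum_j g_j\). It works modulo a power of \(q\) and uses a bound on the real size of \(T(\sigma)\).
\item Your input is only known as \(y\equiv 2^{\ell'}\sigma+\delta \pmod{2^{\ell'+e'}}\). It has an uncontrolled low-order error \(\delta\) and an uncontrolled high part \(2^{\ell'+e'}M\). The high part is unavoidable, because Lemma~\ref{lem:modtorus} itself produces it.
\item So \(y \bmod q\) carries no information, and nonlinear operations do not respect the encoding.
\end{itemize}
In the binomial basis you propose, \(\binom{y}{2}\) is determined by \(y \bmod 2^{\ell'+e'}\) only modulo \(2^{\ell'+e'-1}\). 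Modulo that, and using \(\ell'\ge e'\), it equals \(-2^{\ell'-1}\sigma+2^{\ell'}\sigma\delta+\binom{\delta}{2}\). The quadratic part \(2^{2\ell'-1}\sigma^2\) has dropped out, while the cross term \(2^{\ell'}\sigma\delta\) puts uncontrolled bits into the window. The power basis is worse, since \(M\) enters \(y^k\) through terms such as \(k\,2^{\ell'+e'}M\delta^{k-1}\). Clearing the odd part of \(k!\) fixes integrality, not this problem. In torus language, \(Q/2^{\ell+e}\) is only meaningful modulo \(1\), so encodings can be combined with integer coefficients but not multiplied.

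The paper's proof is built to avoid exactly this: it never applies a nonlinear polynomial to an encoding.
\begin{enumerate}
\item It applies Lemma~\ref{lem:modtorus} to the top \(\modq\) gate as a polynomial in that gate's exact Boolean inputs \(C_1(x),\dots,C_t(x)\). It expands this polynomial into \(t'=t^{O(qe)}\) monomials.
\item Each monomial \(\prod_{i\in M_j}C_i\) is an \(\AND\) of sub-circuits. This \(\AND\) is pushed to the leaves by the \(\AND\circ\modq\to\modq\circ\AND\) switch, yielding depth-\(h\) circuits \(D_j\).
\item Each \(D_j\) is encoded by the induction hypothesis at a larger precision \(e'=O(qe\log s)\). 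This is large enough to absorb the \(t'\) error terms and to give \(\ell'+e'\ge\ell+\ell'+e\).
\item These encodings are combined \emph{linearly} with the coefficients of the top polynomial. This gives \(C(x)2^{\ell+\ell'}+2^{\ell'}E+E'\) modulo \(2^{\ell+\ell'+e}\).
\end{enumerate}
Products are taken only of exact \(0/1\) values, at the circuit level, and the degree grows only through the fan-in of the bottom \(\AND\)s. Your idea of demanding more precision from lower layers is the right one. It has to be attached to this linear scheme rather than to composition.
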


For now, assume that the statement is true.
We can complete the proof of Lemma~\ref{lem:ccztorus} as follows.

\begin{proof}[Proof of Lemma~\ref{lem:ccztorus}]
    For a circuit \(C\) as per the statement, construct a polynomial \(Q\) for \(e = \ceil*{\log(10 n)}\) using Lemma~\ref{lem:cczintpoly}.
    As per the construction, for some \(\ell \geq e\), we have \(Q(a) = C(a) 2^{\ell} + E(a) \mod 2^{\ell + e}\) with \(E(a) \leq 2^{\ell - e}\).
    Define the torus polynomial \(P(x) = \frac{Q(x)}{2^{\ell+1}} \mod 1\).
    Then, we have
    \[P(a) = \frac{C(a)}{2} + \frac{E(a)}{2^{\ell + 1}} \mod 1\]
    The error for \(P\) is at most \(\frac{E(x)}{2^{\ell + 1}} \leq 2^{-e-1} \leq \frac{1}{20 n}\).
    Finally, \(P\) has the same degree as \(Q\), which is at most \((e \log(s))^{O(h)} = \log^{O(h)}(s)\).
    This completes the proof.
\end{proof}

Now, we prove Lemma~\ref{lem:cczintpoly}.
The proof idea is from~\cite{bhrushundi2019torus}, based on an induction on the circuit depth.
For a depth-\(1\) circuit, the statement is as follows.

\begin{lemma}[\cite{bhrushundi2019torus}]
\label{lem:modtorus}
For a \(\modq\) gate, where \(q\) is a prime power, there exists an integer polynomial \(Q\) of degree \(O(qe)\), and an integer \(\ell = O(qe\log(n))\), such that for each \(a \in \{0, 1\}^n\):
\[Q(a) = \modq(a) 2^\ell + E(a) \mod 2^{\ell+e}, 0 \leq E(a) \leq 2^{\ell-e}\]
\end{lemma}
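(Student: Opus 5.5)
The approach is to build $Q$ from an exact modular representation of $\modq$ in three moves: represent it modulo $p$, lift it modulo a higher power of $p$, and then move the resulting bit to position $\ell$ of a dyadic expansion. Write $q=p^k$ and $S(x)=\sum_i x_i$. Throughout, $\modq(a)=1$ iff $q\mid S(a)$, so $\modq(a)$ depends only on the base-$p$ digits $S_0,\dots,S_{k-1}$ of $S(a)$.

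\textbf{Step 1: representation modulo $p$.} By Lucas' theorem, $S_i\equiv\binom{S}{p^i}\pmod p$. Each $\binom{S}{p^i}=\sum_{|T|=p^i}\prod_{j\in T}x_j$ is an integer polynomial of degree $p^i$ in $x$, and it is symmetric. Put
\[R(x)=\prod_{i=0}^{k-1}\Bigl(1-\tbinom{S}{p^i}^{\,p-1}\Bigr).\]
By Fermat, $R(a)\equiv\modq(a)\pmod p$ for every $a\in\{0,1\}^n$. Its degree is at most $(p-1)\sum_i p^i<q$.

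\textbf{Step 2: lifting modulo $p^M$.} I would raise the precision by iterating $y\mapsto 3y^2-2y^3$. This map fixes $0$ and $1$, and if $y\equiv\varepsilon\pmod{p^t}$ with $\varepsilon\in\{0,1\}$, then the image is $\equiv\varepsilon\pmod{p^{2t}}$. After $\lceil\log_2 M\rceil$ iterations this gives an integer polynomial $W$ with $W(a)\equiv\modq(a)\pmod{p^M}$ and $\deg W\le q\cdot 3^{\lceil\log M\rceil}$.

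For $p=2$ the construction is already finished at this point. Take $M=e$ and $Q=2^\ell W$ with $\ell=e$. Then $Q(a)\equiv\modq(a)2^\ell\pmod{2^{\ell+e}}$ with $E\equiv0$.

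\textbf{Step 3: odd $p$, passing to a dyadic expansion.} Choose $d=(p^M+1)/2$ and look at $W(a)\,d/p^M \bmod 1$. This quantity depends only on $W(a)\bmod p^M$, so it equals $\modq(a)\cdot d/p^M \bmod 1$. That is within $p^{-M}$ of $\modq(a)/2$, and $M=O(e)$ suffices to make the error at most $2^{-2e}$.

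Next I replace $d/p^M$ by a dyadic rational $c/2^{\ell+1}$ with $|c/2^{\ell+1}-d/p^M|\le 2^{-\ell-1}$, and set $Q=cW$. Since $|W(a)|\le n^{O(\deg W)}$, the additional error is at most $n^{O(\deg W)}2^{-\ell-1}$. So taking $\ell=\Theta(\deg W\cdot\log n+e)$ makes it at most $2^{-2e}$, which matches $\ell=O(qe\log n)$ once $\deg W=O(qe)$.

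Multiplying by $2^{\ell+1}$, the two errors combine into $Q(a)\equiv\modq(a)2^\ell+E(a)$, possibly with $E$ slightly negative. I would fix the sign by adding the constant $2^{\ell-e-1}$ to $Q$, which gives $0\le E(a)\le 2^{\ell-e}$.

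\textbf{Main obstacle.} The hard part is the degree bound. The Hensel iteration as stated gives $\deg W=q\cdot e^{\log_2 3}$, not the claimed $O(qe)$. That weaker bound is still $\poly(q,e)$ and would be enough for Lemma~\ref{lem:cczintpoly} and the rest of the argument. To reach $O(qe)$ exactly, I would first try lifting with the exact idempotent polynomial $y\mapsto\sum_{j\ge M}\binom{2M-1}{j}y^j(1-y)^{2M-1-j}$, which has degree $2M-1$ and maps residues $\equiv0,1\pmod p$ to residues $\equiv0,1\pmod{p^M}$. I would also check whether composing it with $R$ can be arranged so that the degrees add rather than multiply.

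Every object used is a polynomial in the elementary symmetric polynomials $\binom{S}{p^i}$ with integer coefficients, so $Q$ is symmetric. This is the property needed later in Lemma~\ref{lem:symtorus}.
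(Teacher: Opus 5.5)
Your Steps 1--2 are essentially the paper's polynomial $T$ (Fermat--Euler followed by modulus amplification), and your $p=2$ case is fine. Step 3, however, does not prove the statement for odd $p$.

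Knowing $W(a)\,d/p^M \bmod 1$ and scaling by $2^{\ell+1}$ controls $Q(a)$ only modulo $2^{\ell+1}$. The lemma asserts the congruence modulo $2^{\ell+e}$, which also requires bits $\ell+1,\dots,\ell+e-1$ of $Q(a)$ to vanish, and your constant does not ensure this. Write $W(a)=\modq(a)+p^M u$. Then $c\,p^M=2^\ell(p^M+1)+O(p^M)$, so modulo $2^{\ell+e}$ the value $Q(a)$ contains the term $2^\ell(p^M+1)u$. For odd $u$ this term vanishes only if $2^e\mid p^M+1$. For $e\ge 2$ that forces $p\equiv -1 \pmod{2^e}$, which is impossible once $2^e>p+1$. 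For example, $p=3$, $M=4$, $e=3$ gives $c\cdot 81\equiv 2^{\ell+1}+O(1)\pmod{2^{\ell+3}}$.

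This is not cosmetic. In Lemma~\ref{lem:cczintpoly} the inner polynomials are needed modulo $2^{\ell'+e'}$ with $e'\ge \ell+e$, so that $\sum_j Q_j$ carries $\sum_j D_j$ modulo $2^{\ell+e}$ at position $\ell'$. A congruence modulo $2^{\ell'+1}$ recovers only the parity of $\sum_j D_j$.

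The missing idea is the paper's choice of multiplier: aim at $2^{-e}$ on scale $2^{\ell+e}$ rather than at $1/2$ on scale $2^{\ell+1}$.
\begin{itemize}
\item Put $d=\lceil p^M/2^e\rceil=(p^M+w)/2^e$, so that $w\equiv -p^M \pmod{2^e}$ and $0\le w<2^e$.
\item Let $c$ be the nearest integer to $2^{\ell+e}d/p^M=2^\ell+2^\ell w/p^M$.
\item Then $c\,p^M\equiv O(p^M)\pmod{2^{\ell+e}}$ and $c=2^\ell+O(2^{\ell+e}/p^M)$.
\item Hence $p^M\ge 2^{2e+2}$ together with $2^{\ell-e-1}\ge \max_a|W(a)|+2$ gives $|E(a)|\le 2^{\ell-e-1}$. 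Your constant shift then makes $E\ge 0$. These conditions still allow $M=O(e)$ and $\ell=O(qe\log n)$.
\end{itemize}
This is exactly the paper's $Q=(w\lceil 2^\ell/q^{\ell'}\rceil+2^\ell)T$ with $w=-q^{\ell'} \bmod 2^e$.

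Separately, your stated main obstacle is not one. The degree-$(2M-1)$ amplifier you wrote down, composed with $R$, has degree $(2M-1)(q-1)=O(qe)$. Degrees multiplying is precisely what gives the claimed bound, so you can drop the $3y^2-2y^3$ iteration.
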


We present the proof in the appendix for completeness, as it is not present in the published version of~\cite{bhrushundi2019torus}.
The inductive proof now proceeds as follows.

\begin{proof}[Proof of Lemma~\ref{lem:cczintpoly}]
The base case is a depth \(1\) circuit, which is covered in Lemma~\ref{lem:modtorus}.
Now, as the inductive hypothesis, assume that the statement is true for circuits of depth \(h\).
For the inductive step, consider a circuit of depth \(h+1\).

Suppose the root is a \(\modq\) gate, with \((y_1, \ldots, y_t)\) as its inputs, \(t \leq s\), where \(y_i = C_i(x)\) for the sub-circuit \(C_i\) of depth \(h\) over the original inputs \(x\).
Construct a polynomial \(Q\) of degree \(O(qe)\), such that \(Q(\cdot) = \modq(\cdot) 2^\ell + E(\cdot) \mod 2^{\ell+e}\) with \(\ell = O(qe \log(t))\) and \(E(\cdot) \leq 2^{\ell - e}\), using Lemma~\ref{lem:modtorus}.
Each monomial in this polynomial is a product of at most \(O(qe)\) many \(y_i\)'s.
The total number of such monomials is at most \(t' = t^{O(qe)}\).

Now, each monomial can be seen as an \(\AND\) over the corresponding sub-circuits \(C_i\), which can be propagated to the leaves.
Finally, we get a sub-circuit \(D_j(A_j)\) corresponding to each monomial of \(Q\), where \(A_j\) is a product of \(O(qe)\) many variables.
For each \(D_j\), use the inductive hypothesis to obtain a polynomial \(Q_j\), such that
\[Q_j(A_j(x)) = D_j(A_j(x)) 2^{\ell'} + E_j(A_j(x)) \mod 2^{\ell' + e'}\]
where \(E_j(A_j(x)) \leq 2^{\ell' - e'}\).
We choose the value for \(e'\) later.

Now, we have
\[
\sum_j D_j (A_j(x)) = C(x) 2^{\ell} + E(x) \mod 2^{\ell + e}
\]
Substituting the values of \(D_j\), we get
\begin{align*}
\sum_j Q_j(A_j(x)) &= \rb*{\sum_j D_j(A_j(x))} 2^{\ell'} + E'(x) \mod 2^{\ell' + e'} \\
 &= C(x) 2^{\ell + \ell'} + 2^{\ell'} E(x) + E'(x) \mod 2^{\ell + \ell' + e}
\end{align*}
To ensure the correctness of the latter equality, we will choose \(e'\) such that \(\ell' + e' \geq \ell + \ell' + e\).

Now, as per the claimed statement, we need to show that \(E'(x) + 2^{\ell'} E(x) \leq 2^{\ell + \ell' - e}\).
However, this does not follow directly, as it can be the case that \(E(a) = 2^{\ell - e}\) and \(E'(a) > 0\) for some \(a \in \{0, 1\}^n\).
Nonetheless, note that \(E'(x) \leq t' 2^{\ell' - e'}\).
Hence, if we choose \(e' \geq \ceil*{\log(t')}\), we get \(E'(x) + 2^{\ell'} E(x) \leq 2^{\ell' - O(e\log(s))} + 2^{\ell + \ell' - e} \leq 2^{\ell + \ell' - e + 1}\).
Therefore, if we start the proof by constructing \(Q\) with \(e+1\) when using Lemma~\ref{lem:modtorus}, we can ensure \(E'(x) + 2^{\ell'} E(x) \leq 2^{\ell + \ell' - e}\).

To finish the proof, first note that the choice of \(e' = \ceil*{\log(t')}\) gives us \(e' = O(qe \log(s))\).
This is sufficient to ensure \(\ell' + e' \geq \ell + \ell' + e\).
For the chosen \(e'\), the degree of each \(Q_j\) is \(O\rb*{(e\log(s))^{h-1}}\).

The final polynomial is \(\sum_j Q_j(A_j(x))\), with each \(A_j\) being a product of \(O(qe)\) many variables.
Hence, the degree bound we obtain is \(O\rb*{e\log(s)}^h\), as claimed in the statement.
This completes the proof.
\end{proof}

\subsection{Symmetric Circuits Lead to Symmetric Torus Polynomials}
The final step in the proof of Lemma~\ref{lem:symccztorus} is to prove that the above construction produces a symmetric torus polynomial for a symmetric \(\ccz\) circuit.
We actually prove a more general statement about \(\Gamma\)-symmetric circuits for an arbitrary subgroup \(\Gamma \leqslant S_n\).

\begin{lemma}
\label{lem:symtorus}
    For a \(\Gamma\)-symmetric circuit, the procedure in Lemma~\ref{lem:ccztorus} produces a \(\Gamma\)-symmetric torus polynomial approximation.
\end{lemma}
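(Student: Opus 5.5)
We prove a stronger invariant by induction on the depth, following the recursion in the proof of Lemma~\ref{lem:cczintpoly}. For a gate \(g\) of the layered circuit \(C\), let \(Q_g\) be the integer polynomial the construction assigns to the subcircuit rooted at \(g\), with parameters \(e_g, \ell_g\). The invariant is: \emph{for every \(\pi \in \Gamma\) with extension \(\pi'\), and every gate \(g\), we have \(Q_{\pi'(g)}(x) = Q_g(\pi \circ x)\) syntactically, and \(e_{\pi'(g)} = e_g\), \(\ell_{\pi'(g)} = \ell_g\).} The map \(\pi'\) is a DAG automorphism extending \(\pi\), and the circuits here are layered. So \(\pi'\) preserves gate types (\(\modq\) with the same \(q\), since each layer uses a single \(q\)), layers, and fan-ins. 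Also, \(\pi'\) maps the subcircuit below \(g\) isomorphically onto the subcircuit below \(\pi'(g)\). The lemma follows by applying the invariant to the output gate. That gate is fixed by \(\pi'\): it is the unique sink, so any automorphism fixes it. Hence \(Q(\pi\circ x) = Q(x)\), and dividing by \(2^{\ell+1}\) as in the proof of Lemma~\ref{lem:ccztorus} preserves this.

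First, we make the construction canonical, so that it depends only on the isomorphism type of the subcircuit and not on arbitrary choices. The parameters \(e, \ell\) should be functions only of \((q, \text{fan-in}, \text{depth}, s)\); we fix \(e'=\lceil \log t'\rceil\) with \(t'\) computed from the fan-in and \(q\). For the base case (Lemma~\ref{lem:modtorus}), the polynomial for \(\modq\) on inputs \(y_1,\ldots,y_t\) is built from the sum \(\sum y_i\), via binomial polynomials \(\binom{\sum y_i}{k}\) or similar. It is therefore a symmetric polynomial in \(y_1,\ldots,y_t\), and we should check this against the appendix construction. Bottom \(\AND\) gates are monomials, and \(\pi\) maps the monomial of \(g\) to that of \(\pi'(g)\), since \(\pi'\) maps the inputs of \(g\) bijectively to those of \(\pi'(g)\).

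For the inductive step at a \(\modq\) gate \(g\) with children \(g_1,\dots,g_t\), the outer polynomial \(R(y_1,\ldots,y_t)\) is symmetric in the \(y_i\). The monomials \(\prod_{i\in S} y_i\) correspond to subcircuits \(D_S = \AND_{i\in S} C_{g_i}\), with the \(\AND\) pushed down to the leaves by the canonical procedure. The gate \(\pi'(g)\) has children \(\pi'(g_1),\ldots,\pi'(g_t)\) in some order. Because \(R\) is symmetric with coefficients depending only on \(|S|\), the monomial for \(S\) at \(g\) matches the monomial for \(\pi'(S)\) at \(\pi'(g)\) with the same coefficient. By induction, the pushed-down subcircuits satisfy \(D_{\pi'(S)} = \pi'(D_S)\), so their polynomials satisfy \(Q_{\pi'(S)}(x) = Q_S(\pi\circ x)\). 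Summing over \(S\) gives \(Q_{\pi'(g)}(x) = Q_g(\pi\circ x)\).

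The main obstacle is the push-down step: an \(\AND\) of several depth-\(h\) layered subcircuits must be turned into a single layered subcircuit (via the switching operation from Lemma~\ref{lem:symcczconvert}) in a way that commutes with isomorphisms. We would handle this by defining the push-down purely in terms of the multiset of input subcircuits. The expansion of \(\prod_i P_i\) is indexed by choices of input subsets, which transform equivariantly under \(\pi'\), as already argued in the proof of Lemma~\ref{lem:symcczconvert}. Alternatively, and more cleanly, we can avoid constructing \(D_S\) as a circuit. We apply the inductive hypothesis directly to the product of the children's Boolean functions, defining \(Q_S\) as a canonical function of \(\{Q_{g_i}\}_{i\in S}\) obtained through the same recursion one level down. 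Equivariance then follows immediately from the inductive invariant.
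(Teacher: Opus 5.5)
Your proposal is essentially the paper's argument. The paper first proves Lemma~\ref{lem:permtorus}: if an edge-preserving map $\rho$ between two layered circuits satisfies $\rho(x_i)=y_i$, the construction outputs $P(x)$ and $P(y)$ respectively. It proves this by the same depth induction: the gate polynomial depends only on the fan-in (your remarks on symmetry in the inputs and canonical parameters make this explicit), monomials correspond under $\rho$, and induction is applied to the corresponding pushed-down circuits $D_j, D'_j$. It then takes $\rho=\pi'$ and uses $C_{\pi'}=C$, which plays the role of your fixed output gate.

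The one adjustment is that your invariant should be stated in this two-circuit form, not only for gates of $C$ under automorphisms of $C$. The recursion passes through the pushed-down circuits $D_S$, which are not subcircuits of $C$; your step ``$D_{\pi'(S)}=\pi'(D_S)$, so their polynomials satisfy \ldots'' already uses the stronger form. For the same reason, keep your first push-down option, since building $Q_S$ directly from the $Q_{g_i}$ is not the procedure of Lemma~\ref{lem:ccztorus}.
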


\begin{corollary}
\label{corr:symtorus}
For a symmetric layered modular circuit \(C\), the procedure in Lemma~\ref{lem:ccztorus} can be used to obtain a symmetric torus polynomial approximating the function computed by the circuit.
\end{corollary}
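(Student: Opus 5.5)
The statement to prove is Lemma~\ref{lem:symtorus}, with Corollary~\ref{corr:symtorus} as the case \(\Gamma = S_n\). The plan is to make the construction of Lemma~\ref{lem:cczintpoly} \emph{canonical} and then show, by induction on depth, that it is equivariant under circuit automorphisms. Concretely, I will prove the following stronger statement. Let \(C\) be a layered modular circuit, \(\pi \in \Gamma\), and \(\pi'\) an automorphism of \(C\) extending \(\pi\). For every gate \(g\), the integer polynomial \(Q_g\) built for the subcircuit rooted at \(g\) satisfies
\[
Q_{\pi'(g)}(x) = Q_g(\pi \circ x)
\]
syntactically, with the same parameters \(\ell, e\). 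For the root \(r\) of \(C\), \(\pi'(r) = r\), since automorphisms preserve the unique sink. It then follows that \(Q_r\) is \(\Gamma\)-symmetric. Hence \(P = Q_r/2^{\ell+1}\) from the proof of Lemma~\ref{lem:ccztorus} is \(\Gamma\)-symmetric as well.

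\textbf{Step 1: canonical parameters.} I will make every choice in the construction depend only on invariants that \(\pi'\) preserves. Those invariants are the depth of the gate, its modulus \(q\), which is fixed per layer, and the global size \(s\). Specifically, \(e\) and \(\ell\) at each level are chosen as functions of \((q, e, s)\), using \(s\) rather than the actual fan-in \(t\). Automorphisms preserve fan-in anyway, so this is harmless.

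\textbf{Step 2: the base case.} This is the polynomial of Lemma~\ref{lem:modtorus}. It is a polynomial in the elementary symmetric polynomials of its inputs, since it depends only on \(\sum_i y_i\). So it is a symmetric polynomial in \(y_1, \ldots, y_t\), and relabelling the inputs via \(\pi'\) leaves it unchanged. The same holds for bottom \(\AND\) gates, whose polynomial is a product of their inputs.

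\textbf{Step 3: the inductive step.} Let \(g\) be a \(\modq\) gate with inputs \(C_1, \ldots, C_t\). The outer polynomial is symmetric in the \(y_i\), so every monomial \(\prod_{i\in S} y_i\) has a coefficient \(c_{|S|}\) that depends only on \(|S|\). Each monomial defines the subcircuit \(D_S\), an \(\AND\) of the \(C_i\) for \(i \in S\), pushed to the leaves. The map \(\pi'\) sends the inputs of \(g\) bijectively to the inputs of \(\pi'(g)\). Hence \(\pi'\) sends \(D_S\) to \(D_{\pi'(S)}\) and restricts to an isomorphism between these subcircuits.

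Here lies the main obstacle: the pushing-down of \(\AND\)s must be carried out canonically, so that \(\pi'\) really induces an isomorphism of the resulting layered circuits and the inductive hypothesis applies. I would define the propagation structurally, as in the switching argument of Lemma~\ref{lem:symcczconvert}, by expanding products of the symmetric polynomials \(P_i\). This way the construction is a function of the isomorphism type of the labelled subcircuit. The relevant form of the induction hypothesis is then "isomorphic labelled circuits get polynomials related by the variable map". With that in place, \(Q_{D_{\pi'(S)}}(x) = Q_{D_S}(\pi\circ x)\).

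Summing with the coefficients \(c_{|S|}\) gives \(Q_{\pi'(g)}(x) = Q_g(\pi \circ x)\). Here \(\pi'\) permutes the sets \(S\) of each size, so the terms of the sum are simply reindexed.
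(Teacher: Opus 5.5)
Your proposal is correct and follows essentially the same route as the paper. The paper proves (Lemma~\ref{lem:permtorus}) by induction on depth that the construction of Lemma~\ref{lem:ccztorus} commutes with edge-preserving relabellings of layered circuits, and then applies this to an automorphism \(\pi'\) extending \(\pi\) to get \(P(\pi(x)) = P(x)\); this is exactly your isomorphism-form induction hypothesis plus the fixed-root step, and your explicit handling of canonical parameters, the symmetry of the base \(\modq\) polynomial in its inputs, and the canonical push-down of \(\AND\)s supplies details the paper leaves implicit without changing the argument.
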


The proof of Lemma~\ref{lem:symtorus} follows from a slightly more general statement.
We describe how the construction behaves with respect to morphisms over circuits that preserve the edge connections.
Informally, we show that the construction produces syntactically similar torus polynomials for circuits that are syntactically similar.
The formal statement follows.

\begin{lemma}
\label{lem:permtorus}
    Consider two layered circuits \(C = (G, W)\) and \(C' = (G', W')\), with inputs \((x_1, \ldots, x_n)\) and \((y_1, \ldots, y_n)\) respectively, of the same depth.
    In the \(i\)\textsuperscript{th} layer of both circuits, they use \(\modg_{q_i}\) gates for the same \(q_i\).
    Then, for any map \(\rho : G \to G'\) with \(\rho(x_i) = y_i\), such that \((g, g') \in W\) if and only if \((\rho(g), \rho(g')) \in W'\), the following holds:
    If Lemma~\ref{lem:ccztorus} produces \(P(x)\) as a torus polynomial approximating \(C\), then it produces \(P(y)\) as the torus polynomial approximating \(C'\).
\end{lemma}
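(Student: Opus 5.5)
We prove Lemma~\ref{lem:permtorus} by induction on the depth, following the recursion in the proof of Lemma~\ref{lem:cczintpoly}. The key observation is that the construction is syntactic. The polynomial it outputs is determined by three things: the gate types layer by layer, the wiring (which gates feed which), and the parameters \(q, e, \ell, e'\). The parameters depend only on the \(q_i\), on \(e\), on \(n\), and on fan-ins and sizes. A wiring-preserving bijection \(\rho\) preserves all of these. So we strengthen the statement to an invariant that also tracks the integer polynomial \(Q\): if the construction yields \(Q(x)\) with parameter \(\ell\) for \(C\), then it yields \(Q(y)\) with the same \(\ell\) for \(C'\). The torus polynomial \(P = Q/2^{\ell+1} \bmod 1\) then transfers immediately.

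\textbf{Base case.} In depth one, the root is a \(\modq\) gate over inputs \(x_{i_1},\dots,x_{i_t}\). Its image is a \(\modq\) gate over \(\rho(x_{i_j}) = y_{i_j}\), and \(\rho\) is a bijection onto the edges, so the fan-in \(t\) is the same. Lemma~\ref{lem:modtorus} builds \(Q\) as a fixed univariate-style expression in the input variables. This expression depends only on \(q\), \(e\) and the fan-in, so the construction applied to \(C'\) gives the same expression with \(x_{i_j}\) replaced by \(y_{i_j}\). When the bottom layer consists of \(\AND\) gates, the same remark applies: an \(\AND\) gate over \(x_{i_1},\dots,x_{i_k}\) is the monomial \(\prod_j x_{i_j}\), and it maps to \(\prod_j y_{i_j}\).

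\textbf{Inductive step.}
\begin{enumerate}
\item The root of \(C\) is a \(\modq\) gate over subcircuits \(C_1,\dots,C_t\). Then \(\rho\) maps the root to the root of \(C'\), whose children are \(\rho(\mathrm{root}(C_i))\).
\item The restriction of \(\rho\) to each \(C_i\) is an edge-preserving map onto the subcircuit \(C'_i\) rooted at \(\rho(\mathrm{root}(C_i))\). Because layers use the same \(q_i\), the two subcircuits have matching gate types.
\item The outer polynomial from Lemma~\ref{lem:modtorus} is identical for both roots, with variables \(y_i\) relabelled to the corresponding children. So every monomial \(\prod_{i \in S} y_i\) of \(C\) corresponds to the monomial \(\prod_{i\in S} y_{\rho(i)}\) of \(C'\).
\item Each such monomial is an \(\AND\) of subcircuits pushed to the leaves, giving \(D_j\). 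Its counterpart \(D'_j\) is obtained from \(D_j\) via \(\rho\) (or via the extension of \(\rho\) to the duplicated gates created by the propagation), and this extension is still edge-preserving.
\item The parameters agree. The count \(t'\), and hence \(e' = \lceil \log t'\rceil\), depend only on \(t\), \(q\) and \(e\). So the inductive hypothesis gives \(Q'_j(y) = Q_j(x)|_{x\to y}\), with the same \(\ell'\).
\item Summing over \(j\) yields the claim for \(C'\).
\end{enumerate}

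\textbf{Main obstacle.} The delicate point is the step that propagates \(\AND\)s to the leaves. Here new gates are created, and \(\rho\) must be extended to them while remaining edge-preserving. I would handle this by defining the extension canonically: a new gate is indexed by a pair (original gate, subset \(S\) of root children), and \(\rho\) acts on both coordinates. Edges are then defined by a rule that depends only on the original edges, so they are preserved automatically.

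A secondary concern is that the construction must not depend on any arbitrary ordering of gates, such as the ``leftmost'' conventions. I would note that every parameter choice is a function of isomorphism-invariant quantities, so any choices made are made consistently. If an ordering does matter anywhere, the definition should be read as ordering-free, with \(Q\) a sum over monomials indexed by sets.

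\textbf{Consequence.} Lemma~\ref{lem:symtorus} follows by taking \(C' = C\), \(y = \pi\circ x\) and \(\rho = \pi'\). This gives \(P(x) = P(\pi\circ x)\) syntactically for every \(\pi \in \Gamma\).
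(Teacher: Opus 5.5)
Your proposal is correct and follows the same route as the paper. It uses induction on depth. The base case rests on the fact that the polynomial from Lemma~\ref{lem:modtorus} depends only on the fan-in and $e$. The inductive step matches the monomials of the top-gate polynomials via $\rho$, pushes the resulting $\AND$s to the leaves, applies the hypothesis to each $D_j$, and sums. Your extra bookkeeping makes explicit what the paper's shorter argument leaves implicit: tracking $Q$ and $\ell$, canonically extending $\rho$ to the gates created by propagating the $\AND$s, requiring an ordering-free construction, and treating $\rho$ as a bijection.
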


\begin{proof}
We argue this inductively based on the depths of the circuits \(C, C'\).
In the base case, with depth \(1\), we look at the proof of Lemma~\ref{lem:modtorus}.
The construction of the polynomial \(Q\) depends only on the fan-in of the \(\modq\) gate and the error parameter \(e\).
Hence, the base case follows.

For the inductive case, we look at the proof of Lemma~\ref{lem:ccztorus}.
The polynomials \(Q, Q'\) we construct for the top \(\modq\) gate, in \(C, C'\) respectively, only depend on the fan-in.
Hence, for each monomial in \(Q\), we can find a corresponding monomial in \(Q'\) using the map \(\rho\).

Now, consider a monomial \(M_j\) of \(Q\) and the corresponding monomial \(M'_j\) from \(Q'\) obtained using \(\rho\).
We consider the monomials as \(\AND\) over their inputs, then push the \(\AND\) gate to the bottom, to obtain circuits of form \(D_j(A_j(x))\) and \(D'_j(A'_j(y))\).
Using induction hypothesis, the torus polynomials we obtain for \(D_j(A_j(x))\) and \(D'_j(A_j(y))\), denoted by \(Q_j(A_j(x))\) and \(Q'_j(A'_j(y))\) respectively, are equal.
Therefore, we have \(\sum_j Q_j(A_j(x)) = \sum_j Q'_j(A'_j(y))\), where the LHS and RHS are the torus polynomials produced for \(C\) and \(C'\) respectively.
This completes the proof of the statement.
\end{proof}

Now, we can finish the proof of Lemma~\ref{lem:symtorus} as follows.

\begin{proof}[Proof of Lemma~\ref{lem:symtorus}]
    Consider a \(\Gamma\)-symmetric layered circuit \(C = (G, W)\) as per the statement.
    Any permutation \(\pi \in \Gamma\) extends to a permutation \(\pi' \in \aut(G)\), such that \((g, g') \in W\) if and only if \((\pi'(g), \pi'(g')) \in W\).
    Now, construct a torus polynomial \(P(x)\) that approximates \(C\) using Lemma~\ref{lem:ccztorus}.
    Then, Lemma~\ref{lem:permtorus} implies that \(P(\pi(x))\) is produced as the torus polynomial approximating \(C_{\pi'} = (\pi'(G), W)\).
    
    However, \(C_{\pi'} = C\), hence, Lemma~\ref{lem:ccztorus} would produce \(P(x)\) as the torus polynomial approximating \(C\).
    Therefore, \(P(\pi(x)) = P(x)\), completing the proof that \(P\) is symmetric.
\end{proof}

\subsection{Nested Block Symmetric Groups}

Now, we consider a weaker notion of symmetry, studied by Pago~\cite{pago2026optimal}, which they called nested block symmetry.
Nested block symmetric groups are automorphism groups of rooted trees, defined formally as follows.

\begin{definition}[Nested Block Symmetric Group]
    Consider a tree of depth \(h\), such that each node at distance \(i\) from the root has exactly \(k_i\) many children, and exactly \(n\) leaves labeled with \(x_1, \ldots, x_n\).
    Note that this implies \(\prod_{i=1}^h k_i = n\).
    Denote this tree by \(\mathcal{T}^{\mathbf{k}}_n\), where \(\mathbf{k}\) denotes the tuple \((k_1, \ldots, k_h)\).
    The nested block symmetric group \(\Gamma^{\mathbf{k}}_n\) is defined as the group of permutations \(\pi \in S_n\) over the leaves such that some \(\pi' \in \aut(\mathcal{T}^{\mathbf{k}}_n)\) extends \(\pi\).
\end{definition}

We prove a lower bound for \(\Gamma^{\mathbf{k}}_n\)-symmetric \(\ccz\) circuits computing \(\AND\).

\begin{theorem}
    \label{thm:nestedsymlb}
    Any \(\Gamma^{\mathbf{k}}_n\)-symmetric depth-\(h\) \(\ccz\) circuit, with an underlying tree \(\mathcal{T}^{\mathbf{k}}_n\), requires \(2^{\widetilde{\Omega}(k^{1/O(h)})}\) size to compute \(\AND\), where \(k = \max_{k' \in \mathbf{k}} k'\).
\end{theorem}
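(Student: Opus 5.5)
Starting from a $\Gamma^{\mathbf{k}}_n$-symmetric depth-$h$ $\ccz$ circuit $C$ of size $s$ computing $\AND_n$, I would first apply Lemma~\ref{lem:symcczconvert} and Lemma~\ref{lem:ccztorus} to get a torus polynomial $P$ of degree $\log^{O(h)}(s)$ that approximates $C$ within $\frac{1}{20n}$ error. Lemma~\ref{lem:symtorus} makes $P$ $\Gamma^{\mathbf{k}}_n$-symmetric. This needs the conversion in Lemma~\ref{lem:symcczconvert} to preserve $\Gamma$-symmetry for an arbitrary subgroup $\Gamma$, not only for $S_n$. The argument given there only uses, for each $\pi$ in the group, the existence of an extension $\pi'$, so it carries over verbatim.

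Next I would restrict to a subcube on which $\Gamma^{\mathbf{k}}_n$ acts as a full symmetric group. Let $i$ be a level with $k_i = k$. Fix a node $v$ at distance $i-1$ from the root; it has $k$ children $c_1, \ldots, c_k$, each the root of isomorphic subtrees. Set to $1$ every variable that is not a leaf below $v$. For each child $c_j$, identify all leaves below $c_j$ with a single new variable $z_j$. Under this substitution $\AND_n$ becomes $\AND_k(z_1, \ldots, z_k)$. The polynomial $R(z) = P(x(z))$ is obtained by substituting variables or constants into $P$, so $\deg R \le \deg P$. Since $R$ is an evaluation of $P$ at Boolean points, it still approximates $\AND_k$ within $\frac{1}{20n} \le \frac{1}{20k}$ error.

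Then I would show that $R$ is symmetric in $z_1, \ldots, z_k$. Any permutation $\sigma \in S_k$ of the children of $v$ extends to a tree automorphism: permute the isomorphic subtrees rooted at the $c_j$ and fix everything else. Its restriction to the leaves is some $\pi \in \Gamma^{\mathbf{k}}_n$. This $\pi$ maps the leaf block under $c_j$ onto the block under $c_{\sigma(j)}$ and fixes all other leaves, which are set to the constant $1$. Hence $P(\pi \circ x) = P(x)$ gives $R(\sigma \circ z) = R(z)$ as polynomials.

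One point needs care: after identifying variables, $R$ should be multilinearized using $z_j^2 = z_j$, since torus approximation only concerns Boolean points. Multilinearization commutes with permuting variables, so symmetry and the degree bound are preserved. This identification step is the main place where something could go wrong, because the substitution must be compatible with the $\Gamma$-action, which the block-wise automorphism above ensures.

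Finally, Theorem~\ref{thm:symtoruslb} applied to $\AND_k$ gives $\log^{O(h)}(s) \ge \deg R \ge \widetilde{\Omega}(\sqrt{k})$. Therefore $s \ge 2^{\widetilde{\Omega}(k^{1/O(h)})}$.
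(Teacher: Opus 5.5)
Your proposal is correct and follows the paper's argument. Both routes get a $\Gamma^{\mathbf{k}}_n$-symmetric torus polynomial via Lemma~\ref{lem:symtorus}, set every variable outside one node to $1$, identify the leaves under each of its $k$ children, and finish with Theorem~\ref{thm:symtoruslb} applied to $\AND_k$. The paper does the identification inductively, one tree level at a time, while you do it in a single substitution. Your two extra checks are points the paper leaves implicit, and you handle both correctly: that Lemma~\ref{lem:symcczconvert} preserves symmetry for an arbitrary subgroup $\Gamma$, and that $R$ should be multilinearized.
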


\begin{proof}
    To start the proof, we use Lemma~\ref{lem:symtorus} to construct a \(\Gamma^{\mathbf{k}}_n\)-symmetric torus polynomial \(P\) of degree \(\log(s)^{O(h)}(n)\) approximating the circuit within an error of \(\varepsilon=\frac{1}{20n}\).
    Denote the depth of \(\mathcal{T}^{\mathbf{k}}_n\) as \(h'\).
    Now, consider the case when \(k = k_{h'}\), i.e. the maximum fan-in in the underlying tree \(\mathcal{T}^{\mathbf{k}}_n\) appears at the layer just above the leaves.
    Then, we choose an arbitrary node above the leaves and set all variables that occur outside this node to \(1\).
    After applying this partial restriction, we notice that the group \(\Gamma^{\mathbf{k}}_n\) collapses to the symmetric group \(S_{k}\).
    Hence, we get a symmetric torus polynomial \(P'\) approximating \(\AND\) over \(k\) variables.
    Therefore, using Theorem~\ref{thm:symtoruslb} for \(\AND\) over \(k\) variables, we get the size lower bound.

    Otherwise, if \(k \neq k_{h'}\), we reduce the depth of \(\mathcal{T}^{\mathbf{k}}_n\) by considering each node just above the leaves, and setting all variables appearing within such a node as equal to each other.
    After applying this partial restriction, we effectively reduce the depth of \(\mathcal{T}^{\mathbf{k}}_n\) by one.
    Note that this transformation does not increase the degree of \(P'\), and the function remains \(\AND\) over the remaining variables.
    Now, we can proceed inductively till we reach the level where \(k\) appears, and use our argument above to finish the proof.
\end{proof}

\section{Towards Constant Degree Hypothesis for Semiprime Moduli}
\label{sec:cdhapproach}

In this section, we propose an approach to prove size lower bounds for \(\modp \circ \modm \circ \AND_{O(1)}\) circuits computing \(\AND_n\), where \(m\) has two distinct prime divisors with one of them being \(p\).
For the sake of simplicity, we state our results for \(m = pq\) being a product of two primes.
The approach is again based on torus polynomials: we construct a low-degree torus polynomial approximation for such circuits, albeit in a slightly modified sense.
We consider a generalized version of torus polynomial approximation, also studied in~\cite{bhrushundi2019torus}, wherein the polynomial is close to \(\alpha f\) for some \(\alpha \in (0, 1)\).
For clarity, we define it formally below.

\begin{definition}[\(\alpha\)-Torus Polynomial Approximation]
For some \(\alpha \in (0, 1)\), and a Boolean function \(f : \{0, 1\}^n \to \{0, 1\}\), we define \(P\) as an \(\alpha\)-torus polynomial approximating \(f\) within an error of \(\varepsilon\), if the \emph{fractional part} of \(P(a)\) is at most \(\varepsilon\) away from \(\alpha f(a)\) for each \(a \in \{0, 1\}^n\).
\end{definition}

For our main result, we need the following statement, which follows from a simple modification of the proof for Lemma~\ref{lem:modtorus}.

\begin{lemma}
\label{lem:alphatorus}
For a \(\modq\) gate, where \(q\) is a prime power, and any natural number \(p \geq 2\), there exists an integer polynomial \(Q\) of degree \(O(qe)\), and an integer \(\ell = O(qe\log(n))\), such that for each \(a \in \{0, 1\}^n\):
\[Q(a) = \modq(a) p^\ell + E(a) \mod p^{\ell+e}, 0 \leq E(a) \leq p^{\ell-e}\]
\end{lemma}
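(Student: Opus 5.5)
The plan is to build $Q$ as an integer multiple of an integer polynomial that represents $\modq$ exactly modulo a high power of the prime dividing $q$, and then to use rounding to move the information from $r$-adic to $p$-adic form. Write $q=r^k$ with $r$ prime and let $N=|a|=\sum_i a_i$. The construction has three steps.

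\emph{Step 1: an exact representation modulo $r$.} By Lucas' theorem, $\binom{N}{r^j} \bmod r$ is the $j$-th base-$r$ digit of $N$. So
\[
F(a)=\prod_{j=0}^{k-1}\Bigl(1-\tbinom{N}{r^j}^{\,r-1}\Bigr)
\]
satisfies $F(a)\equiv \modq(a) \pmod r$, since $\modq(a)=1$ exactly when $q\mid N$. Multiplying out $\binom{N}{r^j}$ shows that $F$ is an integer-valued polynomial in $x$ of degree $\sum_j r^j(r-1)=O(q)$. If the binomial coefficients are not integral polynomials, I will scale by a unit modulo $r$ or expand into products of $\binom{N}{\cdot}$, which are integral combinations of elementary symmetric polynomials.

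\emph{Step 2: amplification to modulus $r^{E}$.} Apply the Beigel--Tarui/Toda modulus-amplifying polynomial $A_E$. It is an integer polynomial of degree $2E-1$ with $A_E(z)\equiv 0 \pmod{r^E}$ when $z\equiv 0\pmod r$, and $A_E(z)\equiv 1\pmod{r^E}$ when $z\equiv 1\pmod r$. Set $F'=A_E(F)$, so that $F'(a)\equiv \modq(a)\pmod{r^E}$. Then $F'$ has degree $O(qE)$, and on $\{0,1\}^n$ we have $|F'(a)|\le n^{O(qE)}$. I will choose $E=\lceil (e+2)\log p/\log r\rceil=O(e)$, so that $r^{E}\ge 4p^{e}$, and hence $\deg F'=O(qe)$.

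\emph{Step 3: converting base $r$ to base $p$ by rounding.} This is the main step. Set $c=\mathrm{round}(r^E/p^e)$, so that $c/r^E = p^{-e}+\delta$ with $|\delta|\le r^{-E}\le p^{-e}/4$. Now put
\[
Q(a)=\kappa\,F'(a)+\Delta,\qquad \kappa=\mathrm{round}\bigl(p^{\ell+e}c/r^E\bigr),
\]
where $\Delta$ is a constant shift near $p^{\ell-e}/2$. Reducing modulo $p^{\ell+e}$ is the same as taking $p^{\ell+e}$ times the fractional part of $Q/p^{\ell+e}$. The quantity $\kappa F'/p^{\ell+e}$ differs from $cF'/r^E$ by at most $|F'|/p^{\ell+e}$. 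Since $F'\equiv\modq(a)\pmod{r^E}$, the fractional part of $cF'/r^E$ equals that of $c\,\modq(a)/r^E$, which is $\modq(a)(p^{-e}+\delta)$. Hence modulo $p^{\ell+e}$,
\[
Q(a)\equiv \modq(a)\,p^{\ell}+\text{error}.
\]
The error has absolute value at most $p^{\ell+e}|\delta|+|F'|+O(1)$. Choosing $\ell=\Theta(\log_p \max|F'|)+O(e)=O(qe\log n)$ makes $|F'|\ll p^{\ell-e}$.

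The hard part is the $p^{\ell+e}|\delta|$ term. Its size is about $p^{\ell}\cdot p^e/r^E$, so it must be pushed below $p^{\ell-e}/4$. That forces $r^E\ge 4p^{2e}$ rather than $r^E\ge 4p^e$, which is still $E=O(e)$. Once this holds, a constant shift $\Delta\approx p^{\ell-e}/2$ makes the error term nonnegative and at most $p^{\ell-e}$, as the statement requires.

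The degree of $Q$ equals $\deg F'=O(qe)$, and $\ell=O(qe\log n)$ (treating $p$ as a constant), matching the statement. This argument is the one behind Lemma~\ref{lem:modtorus}, with the base $2$ replaced by $p$ throughout; nothing in it used $p=2$.
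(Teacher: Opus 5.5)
Your proposal is correct and is essentially the paper's appendix argument for Lemma~\ref{lem:modtorus} with base $2$ replaced by $p$. Both arguments represent $\modq$ exactly modulo a high power $M$ of the underlying prime (Fermat/Lucas plus Beigel--Tarui amplification), then multiply by an integer constant close to $p^{\ell+e}\lceil M/p^e\rceil/M$: the paper's $w\lceil p^{\ell}/q^{\ell'}\rceil+p^{\ell}$ with $w\equiv -q^{\ell'}\pmod{p^e}$ is such a constant for $M=q^{\ell'}$, just rounded more coarsely than your $\kappa$ for $M=r^E$. Your corrected choice $r^E\ge 4p^{2e}$ is right, and your shift $\Delta$ secures $E(a)\ge 0$, which the paper's sketch does not address even though the amplified polynomial can take negative values.
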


Now, we proceed with the proof of our main result.

\begin{theorem}
\label{thm:cdhapprox}
For any \(\modp \circ \modg_{pq} \circ \AND_{O(1)}\) circuit of size \(s = \poly(n)\), there exists a \(\frac{1}{p}\)-torus polynomial of degree \(O(\log(n))\) that approximates the circuit within \(\frac{1}{20n}\) error.
\end{theorem}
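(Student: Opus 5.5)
The plan is to build the approximation from the bottom up, working throughout with integer polynomials whose base-\(p\) expansion holds the desired bit at a fixed position. Only at the end do we divide by a power of \(p\), as in the proof of Lemma~\ref{lem:ccztorus}, but in base \(p\) rather than base \(2\).

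\textbf{Bottom two layers.} Write the circuit as \(\modp(y_1,\ldots,y_t)\) with \(t \le s = \poly(n)\). Here each \(y_i = \modg_{pq}(z_{i,1},\ldots,z_{i,r})\), and each \(z_{i,j}\) is an \(\AND_{O(1)}\) of input variables, i.e.\ a monomial of constant degree. Since \(p\) and \(q\) are distinct primes, \(y_i = \modp(z_{i,\cdot})\cdot \modq(z_{i,\cdot})\). I apply Lemma~\ref{lem:alphatorus} twice with base \(p\) and a common error parameter \(e\): once to the \(\modq\) gate and once to the \(\modp\) gate (the prime power there is \(p\) itself). This gives integer polynomials \(R_i, R'_i\) in the \(z_{i,j}\) of degree \(O(pe)\) and \(O(qe)\). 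After padding, both lemmas can be arranged to use the same \(\ell = O(e\log n)\). Then
\[R_i R'_i \equiv y_i\, p^{2\ell} + E_i \pmod{p^{2\ell+e}}, \qquad 0\le E_i \le 3p^{2\ell-e}.\]
The cross terms are bounded by \(p^{\ell}\cdot p^{\ell-e}\) and \(p^{2\ell-2e}\). Substituting the constant-degree monomials for the \(z_{i,j}\) multiplies the degree only by \(O(1)\). Hence \(S_i := R_iR'_i\) has degree \(O((p+q)e)\).

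\textbf{Top gate.} Let \(Y=\sum_i y_i\) and \(\Sigma=\sum_i S_i\), so that \(\Sigma \equiv Y p^{2\ell} + E \pmod{p^{2\ell+e}}\) with \(0\le E\le 3t\,p^{2\ell-e}\). Over \(\mathbb{Z}\), Fermat gives \(1 - Y^{p-1} \equiv \modp(Y) \pmod p\). Set \(L = 2\ell(p-1)\) and define
\[P(x) = \frac{p^{L} - \Sigma(x)^{p-1}}{p^{L+1}} \bmod 1 .\]
Expanding \(\Sigma^{p-1}=(Yp^{2\ell}+E+p^{2\ell+e}B)^{p-1}\), the main term is \(Y^{p-1}p^{L}\). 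Every other term either carries a factor \(p^{2\ell(p-2)+2\ell+e}=p^{L+e}\), which vanishes modulo \(p^{L+1}\) once \(e\ge1\), or is of the form \(\binom{p-1}{k}(Yp^{2\ell})^{p-1-k}E^k\) with \(k\ge1\). Such a term is at most a constant times \(t^{p-1} p^{L}\cdot t\,p^{-e}\) in absolute value. After division by \(p^{L+1}\), the fractional part of \(P(a)\) is therefore within \(O(t^{p}p^{-e})\) of \(\modp(Y(a))/p\), which is the output of the circuit divided by \(p\). Choosing \(e = C\log_p n\) for a large enough constant \(C\), using \(t=\poly(n)\), makes the error at most \(\frac{1}{20n}\). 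The degree is \((p-1)\cdot O((p+q)e) = O(\log n)\), since \(p\) and \(q\) are constants.

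\textbf{Main obstacle.} The delicate step is the error analysis in the \((p-1)\)-th power. The binomial cross terms \(Y^{p-1-k}E^k\) are not small relative to \(p^{L}\) unless \(p^{e}\) dominates \(t^{p}\). The "mod \(p^{2\ell+e}\)" ambiguity must also be checked to survive raising to a power; this needs the lowest power of \(p\) in those terms to exceed \(L+1\). I would first try to verify directly that \(e=\Theta(\log n)\) suffices for both requirements, as sketched above. If the naive bound is too lossy, the fallback is the route of Lemma~\ref{lem:cczintpoly}: choose the inner error parameter \(e' = e + O(\log t)\), larger than the outer one, which still keeps the degree \(O(\log n)\).
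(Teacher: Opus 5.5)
Your construction has a genuine gap wherever you multiply two approximations: in \(R_iR'_i\), and again in \(\Sigma^{p-1}\).

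\textbf{Why the product step fails.} Lemma~\ref{lem:alphatorus} controls \(R_i,R'_i\) only modulo \(p^{\ell+e}\). So \(R_i=up^\ell+E_1+p^{\ell+e}B_1\) and \(R'_i=vp^\ell+E_2+p^{\ell+e}B_2\), with \(B_1,B_2\) unknown integers. Modulo \(p^{2\ell+e}\), the product contains, besides the cross terms you bounded, the term \((B_1E_2+B_2E_1)p^{\ell+e}\). Since \(E_1,E_2\) need not be divisible by \(p\), this term makes every digit from position \(\ell+e\) upward uncontrolled. Because \(\ell\ge e\), that range includes digit \(2\ell\), which is exactly where you want to read \(y_i\). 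So the congruence \(R_iR'_i\equiv y_ip^{2\ell}+E_i\pmod{p^{2\ell+e}}\) with small \(E_i\) is false in general.

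\textbf{Why the powering step fails.} The same happens in \((Yp^{2\ell}+E+p^{2\ell+e}B)^{p-1}\) for \(p\ge 3\). The expansion contains mixed terms such as \((Yp^{2\ell})^{p-3}\,E\,p^{2\ell+e}B\), whose multinomial coefficient is prime to \(p\). Their \(p\)-adic valuation can be as low as \(L-2\ell+e<L+1\), and their size is unbounded. Your dichotomy ``a factor \(p^{L+e}\), or a pure \(Y^{p-1-k}E^k\) term'' omits precisely these terms.

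\textbf{Why your fallback does not help.} Enlarging the inner error parameter does not rescue this. To multiply, you would need each factor known within a window of roughly \(\ell\) digits above the target digit. In Lemma~\ref{lem:alphatorus}, however, \(\ell\) must absorb the quotient of \(T(x)\) by \(q^{\ell'}\), which can be as large as \(n^{\Theta(e)}\). Hence \(\ell\) always exceeds the window \(e\) by a \(\log n\) factor. Forcing a wider window, for instance making the \(\modp\) factor exact modulo \(p^{\ell+e}\), already costs degree of order \(\log^2 n\).

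\textbf{The missing idea.} The paper never multiplies approximations. It does all the multiplicative structure exactly at the Boolean level, and applies the base-\(p\) lemma only once, linearly.
\begin{itemize}
\item Split each \(\modg_{pq}\) into a \(\modp\) and a \(\modq\) gate. Rearrange so that all \(\modp\) gates sit above a layer of \(\modq\circ\AND_{O(1)}\) subcircuits.
\item By Fermat, represent this \(\modp\) part exactly modulo \(p\) by a constant-degree integer polynomial in those Boolean outputs. It has \(t'=\poly(n)\) monomials.
\item Each monomial is an \(\AND\) of \(O(1)\) such subcircuits, so it can be rewritten as a single \(\modq\circ\AND_{O(1)}\) circuit \(D_j\). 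This gives \(\sum_j c_jD_j\equiv C(x)\pmod p\).
\item Apply Lemma~\ref{lem:alphatorus} to each \(D_j\) to get \(P_j\). Then \(\sum_j c_jP_j\equiv C(x)p^\ell+E'(x)\pmod{p^{\ell+1}}\), where the errors simply add: \(0\le E'\le p\,t'\,p^{\ell-e}\).
\item Choose \(e=O(\log n)\) and divide by \(p^{\ell+1}\). This yields the degree-\(O(\log n)\) \(\frac1p\)-torus polynomial.
\end{itemize}
Multiplying one approximation by an exact integer polynomial with \(\poly(n)\)-bounded values is harmless modulo \(p^{\ell+1}\), since it only scales the error by \(\poly(n)\). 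What cannot be done is multiplying two approximations.
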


\begin{proof}
To start the proof, we rewrite the \(\modg_{pq}\) gate as an \(\AND\) over \(\modp\) and \(\modq\) gates.
Then, we transform the circuit such that the first two layers are \(\modp\) gates, then a layer of \(\modq\) gates, and finally \(\AND\) gates of \(O(1)\) fan-in at the bottom.
Now, consider the top two layers of \(\modp\) gates as a circuit \(C\) with inputs \((y_1, \ldots, y_m)\), where each \(y_i\) is computed by a \(\modq \circ \AND_{O(1)}\) circuit.
Construct a polynomial \(P\), using Fermat-Euler theorem, such that \(P(\cdot) = C(\cdot) \mod p\).
The polynomial \(P\) has \(O(1)\) degree, hence, it contains at most \(t' = \poly(n)\) many monomials.
Each of these monomials \(M_j\) can be thought of as an \(\AND_{O(1)} \circ \modq \circ \AND_{O(1)}\) circuit, which we convert to a \(\modq \circ \AND_{O(1)}\) circuit.

Now, for each \(\modq\) gate \(D_j\) with its inputs being \(A_j(x)\), we use Lemma~\ref{lem:alphatorus} to obtain a polynomial \(P_j\) such that \(P_j(A_j(x)) = D_j(A_j(x)) p^\ell + E_j(A_j(x)) \mod p^{\ell + 1}\), where \(E_j(A_j(x)) \leq p^{\ell - e}\).
We will choose the value for \(e\) later.
Then, we get \(\sum_{j=1}^{t'} P_j(A_j(x)) = \rb*{\sum_{j=1}^{t'} D_j(A_j(x))} p^\ell + t' E(x) \mod p^{\ell + 1} = C(x) p^\ell + t' E(x) \mod p^{\ell + 1}\).
Finally, we need to ensure that \(t' E(x) \leq \frac{p^{\ell}}{20n}\), for which it suffices to choose a large enough \(e = O(\log(n))\).
The final polynomial as required by the statement is \(\frac{\sum_{j=1}^{t'} P_j(A_j(x))}{p^{\ell+1}}\), which has degree \(O(\log(n))\).
\end{proof}

This result suggests a potential approach for proving that \(\modp \circ \modg_{pq} \circ \AND_{O(1)}\) circuits require superpolynomial size to compute \(\AND_n\).
In~\cite[Theorem 6]{krishan2026lower}, the authors proved that any torus polynomial that approximates \(\AND_n\) within an error smaller than \(\varepsilon\) requires degree more than \(\log\rb*{\frac{1}{\varepsilon}} - 2\).
For \(\varepsilon = \frac{1}{20n}\), the degree required is \(\Omega(\log(n))\).
Their proof can be easily extended to prove that any \(\alpha\)-torus polynomial that approximates \(\AND_n\) within \(\frac{1}{20n}\) error requires degree \(\Omega(\log(n))\).
The authors note a gap between this lower bound and the known upper bound, and suggest bridging this gap in~\cite[Open Problem 4]{krishan2026lower}, hinting at a possibility of improving the lower bound.
Using Theorem~\ref{thm:cdhapprox}, we note that improving the lower bound to \(\omega(\log(n))\) proves the superpolynomial size lower bound we are aiming for.
Hence, we conjecture that the lower bound is indeed stronger.

\begin{conjecture}
\label{conj:genandlb}
    For any \(\alpha \in (0, 1)\), any \(\alpha\)-torus polynomial that approximates \(\AND_n\) within \(\frac{1}{20n}\) error requires degree \(\omega(\log(n))\).
\end{conjecture}

\section{Degree Upper Bounds for Periodic Functions}
\label{sec:periodub}

In this section, we establish an upper bound on the degree of symmetric torus polynomials approximating periodic functions.
We prove Theorem~\ref{introthm:periodapprox} restated below.
\thmperiodapprox*

As the first step in the proof, we construct a symmetric layered modular circuit that computes \(\modm\).
The statement is a simple case of Lemma~\ref{lem:symcczconvert}, for which we omit the proof.

\begin{lemma}
    \label{lem:modmconvert}
    There is a symmetric layered modular circuit that computes \(\modm\).
    If \(m\) has \(O(1)\) distinct prime divisors, then the circuit has size \(O(n)\) and depth \(O(1)\).
\end{lemma}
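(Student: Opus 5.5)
We build the circuit directly rather than specialising the proof of Lemma~\ref{lem:symcczconvert}. The point to watch is the size accounting, so that we get exactly \(O(n)\) and not just \(\poly(n)\).

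Write \(m=\prod_{i=1}^{f} q_i\), where the \(q_i\) are powers of distinct primes \(p_1<\dots<p_f\) and \(f=O(1)\) by hypothesis. We use the identity
\[
\modm(x)=\AND\rb*{\modg_{q_1}(x),\ldots,\modg_{q_f}(x)},
\]
which is the Chinese remainder theorem applied to \(\sum_j x_j\). A layered modular circuit may use \(\AND\) gates only at the bottom, so we must express the top \(\AND\) and all wiring with \(\modg\) gates. Two elementary facts do this.
\begin{itemize}
\item A fan-in-\(1\) \(\modg_r\) gate computes \(\NOT\), since \(\modg_r(z)=1\) iff \(z\equiv 0 \pmod r\), i.e.\ iff \(z=0\). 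Two such gates in series compute the identity, so they serve as wires that carry a value up two layers.
\item If \(r>f\), then \(\AND(z_1,\ldots,z_f)=\modg_r(1-z_1,\ldots,1-z_f)\). The sum \(\sum_i(1-z_i)\) lies in \([0,f]\), so it vanishes modulo \(r\) exactly when every \(z_i=1\). We take \(r\) to be a power of \(p_1\) exceeding \(f\); if the layering convention forbids reusing \(p_1\), we take a power of any prime, which is still a constant.
\end{itemize}

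The circuit is assembled in three stages.
\begin{enumerate}
\item Place the gates \(g_i=\modg_{q_i}(x_1,\ldots,x_n)\) on separate layers, one per \(i\), so that each layer uses a single modulus. The inputs \(x_j\) reach the layer of \(g_i\) through chains of fan-in-\(1\) gates, using double negations to keep the values unchanged and the layer's modulus on each chain gate. Each gate \(g_i\) is sent upward by a similar identity chain.
\item Once all \(g_i\) are available on a common layer, apply a fan-in-\(1\) \(\NOT\) gate to each.
\item Apply the top \(\modg_r\) gate to the negated outputs.
\end{enumerate}
The depth is \(O(f)=O(1)\). The size is \(O(f\cdot n)\) for the \(n\) input chains plus \(O(f^2)\) for the remaining gates, which is \(O(n)\). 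Correctness follows from the two facts above.

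For symmetry, take \(\pi\in S_n\). Extend it to the circuit by sending the chain gates above \(x_j\) to the corresponding chain gates above \(x_{\pi(j)}\), and by fixing every other gate. Every \(g_i\) reads all of the chains at its layer, so its input set is mapped onto itself. The gates above the \(g_i\) do not depend on \(x\) individually, so every edge is preserved and the map is a circuit automorphism extending \(\pi\).

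The main obstacle is reconciling the layering constraints (\(\AND\) only at the bottom, one modulus per layer) with both symmetry and the linear size bound. An \(\AND\)-to-bottom switch as in Lemma~\ref{lem:symcczconvert} would expand the products and cost \(\poly(n)\) size. The plan above avoids that switch by using \(\modg\) gates to simulate \(\NOT\) and the constant fan-in top \(\AND\). It also keeps the input chains uniform across variables, which preserves symmetry while costing only \(O(n)\) extra gates.
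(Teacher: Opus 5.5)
Your construction is correct, and it takes a different route from the one the paper has in mind. The paper omits the proof and calls the lemma a special case of Lemma~\ref{lem:symcczconvert}. That route splits \(\modm\) by the Chinese remainder theorem into an \(\AND\) of \(\modg_{q_i}\) gates, stacks these with fan-in-\(1\) dummy gates so that each layer has one modulus, and then pushes the \(\AND\) to the bottom by expanding a product of polynomials representing the \(\modg_{q_i}\) gates. The payoff is that it lands exactly in the form \(\modg_{q_1}\circ\cdots\circ\modg_{q_f}\circ\AND\), one layer per prime power, which is what the factor \(\prod_i q_i\) in Lemma~\ref{lem:modmtorus} is tailored to. The cost is that the expansion creates one bottom \(\AND\) gate per monomial. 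That gives \(\poly(n)\) size, with exponent and bottom fan-in growing with the \(q_i\), rather than the claimed \(O(n)\); it is not even polynomial once the \(q_i\) grow with \(n\), which the hypothesis allows. You avoid the switch entirely: fan-in-\(1\) gates serve as \(\NOT\), and a single \(\modg_r\) gate with \(r>f\) serves as the constant fan-in top \(\AND\). This gives size \(O(fn)\) and depth \(O(f)\) however large the \(q_i\) are. Your symmetry argument (permute the input chains, fix everything else) is immediate. You also correctly track that, under the paper's convention, fan-in-\(1\) gates negate, which the gadget in Lemma~\ref{lem:symcczconvert} glosses over.

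A few details to tighten:
\begin{itemize}
\item The double-negation chains force the \(g_i\) to sit at layers of the same parity, for example \(g_i\) at layer \(2i-1\) with a spacer layer in between. State this placement explicitly.
\item For the top gate you need neither a second power of \(p_1\), which the layered-circuit definition forbids, nor a new prime. Since \(p_f\geq f+1\), the choice \(r=q_f\) already satisfies \(r>f\).
\item If you feed this circuit into Lemma~\ref{lem:modmtorus}, its degree bound carries the product of the moduli used. Your spacer, \(\NOT\) and top layers should then use constant-size moduli, e.g.\ a small prime not dividing \(m\), exceeding \(f\) for the top gate. This keeps the extra factor \(O(1)\) and the degree at \(m\log^{O(1)}(n)\).
\end{itemize}
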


Next, we describe a construction for torus polynomials approximating layered modular circuits.
The proof is a simple modification of the proof for Lemma~\ref{lem:ccztorus}, which we omit.

\begin{lemma}
    \label{lem:modmtorus}
    Consider a layered modular circuit of depth \(O(1)\) and size \(O(n)\) that uses \(\modg_{q_i}\) gates for \(i \in [h]\).
    Then, there exists a torus polynomial of degree \(\prod_i q_i \cdot (e\log(n))^{O(1)}\) approximating the circuit within \(\varepsilon = 2^{-e}\) error.
\end{lemma}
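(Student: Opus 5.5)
We would follow the inductive construction from the proof of Lemma~\ref{lem:cczintpoly}, but keep track of the dependence on the moduli \(q_i\) rather than absorbing them into the \(O(\cdot)\). Concretely, we will prove by induction on the depth \(h\) the following refinement of Lemma~\ref{lem:cczintpoly}. For a layered modular circuit of size \(O(n)\) whose layers use \(\modg_{q_1}, \ldots, \modg_{q_h}\) from the top, and for every \(e \geq 1\), there are an integer polynomial \(Q\) and an integer \(\ell \geq e\) such that \(Q(a) = C(a)2^{\ell} + E(a) \bmod 2^{\ell+e}\) with \(0 \le E(a) \le 2^{\ell-e}\), and
\[\deg(Q) \le \prod_{i} q_i \cdot (c\,e\log n)^{h}\]
for an absolute constant \(c\). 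The bottom layer of \(\AND\) gates has constant fan-in, so it contributes only a constant factor. Given this, we set \(P = Q/2^{\ell+1}\), exactly as in the proof of Lemma~\ref{lem:ccztorus}. This yields error \(2^{-e-1}\le 2^{-e}\) and degree \(\prod_i q_i\cdot(e\log n)^{O(1)}\), since \(h=O(1)\).

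For the base case we invoke Lemma~\ref{lem:modtorus}. It gives degree \(O(q e)\) and \(\ell = O(qe\log n)\) for a single \(\modq\) gate, which is of the required form.

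For the inductive step, let the top gate be \(\modg_{q_1}\) with fan-in \(t\le O(n)\). We apply Lemma~\ref{lem:modtorus} with parameter \(e+1\) to obtain \(Q\) of degree \(O(q_1 e)\). We expand it into at most \(t' = t^{O(q_1e)}\) monomials, push each monomial's \(\AND\) down to the constant-fan-in bottom layer as in the earlier proof, and apply the induction hypothesis to each resulting subcircuit \(D_j\) with error parameter \(e' = \lceil\log t'\rceil + \ell + e\). Note that \(e' = O(q_1 e\log n)\) because \(\ell = O(q_1 e \log n)\). The error bookkeeping is identical to the proof of Lemma~\ref{lem:cczintpoly}, so correctness carries over unchanged.

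The main obstacle is the degree accounting. Done naively, this recursion multiplies the factor \(q_1 e\log n\) coming from \(e'\) into the degree of every lower layer, which would give \((\prod_i q_i)^{h}\) rather than \(\prod_i q_i\). To avoid this, we would exploit the fact that in the construction the polynomial for a \(\modq\) gate has degree linear in \(q\) but only linear in its own error parameter. We would therefore separate the blow-up in the error parameter, which is polylogarithmic, from the dependence on \(q\). Concretely, we would aim to show that the lower-layer degree depends on \(e'\) only through a factor \(e'/ (q_1 e)\cdot\) polylog. If that fails, we would instead choose \(\ell\) in Lemma~\ref{lem:modtorus} as \(O(e\log n)\) with \(q\) entering only the degree, which needs a re-inspection of its proof in the appendix. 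Either way, the exponent on \(q_i\) stays one and the \(e\log n\) terms compound to \((e\log n)^{O(h)}=(e\log n)^{O(1)}\).
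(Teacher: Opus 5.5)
Your framework is the one the paper points to (it omits the proof, calling it a simple modification of Lemma~\ref{lem:ccztorus}). You also rightly notice that the recursion of Lemma~\ref{lem:cczintpoly} does not by itself give exponent one on \(\prod_i q_i\). But that exponent is the whole content of the lemma beyond Lemma~\ref{lem:ccztorus}, and your proposal leaves it as an aim, so there is a genuine gap.

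Your inductive hypothesis (size \(O(n)\), constant bottom fan-in) is not preserved by your own step. Pushing an \(\AND\) of \(O(q_1e)\) level-two gates through the \(\modg_{q_2}\) layer produces subcircuits \(D_j\) whose bottom \(\AND\)s have fan-in of order \(q_1e\,q_2\), growing further through the lower layers. Their size is up to \(n^{O(q_1e\,q_2)}\) rather than \(O(n)\). So the hypothesis does not apply to them. Moreover, the \(x\)-degree of each \(Q_j\) is its degree in its own inputs times that fan-in, i.e.\ per-layer degrees multiply. Already for depth two, the bound the construction yields is of order \(q_2e'\cdot q_1e\,q_2\) with \(e'=\Theta(q_1e\log n)\), which is quadratic in both moduli.

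Neither fallback repairs this.
\begin{itemize}
\item The first (lower-layer degree depending on \(e'\) only through \(e'/(q_1e)\)) restates what is needed. In the construction, the layer-two polynomial has degree \(\Theta(q_2e')\) in its inputs, linear in \(e'\).
\item The second is not available. In the appendix argument \(\ell\) must exceed roughly \(e+\log_2\abs{T(x)}\), where \(T\) has degree \(\Theta(q\ell')\) and values up to \(n^{\Theta(q\ell')}\). Even with a small \(\ell\), you would still need \(e'\ge\log t'=\Omega(q_1e\log n)\) to absorb the summed errors of the \(t'\) lower polynomials, so \(q_1\) stays in \(e'\).
\end{itemize}

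What is missing is a mechanism that makes the per-layer degrees \emph{add}. It has to use the structure of the circuit the lemma is applied to via Lemma~\ref{lem:modmconvert}, namely \(\modm(x)=\prod_{i=1}^r\modg_{q_i}(x)\) with all factors reading the same \(x\). For instance:
\begin{itemize}
\item Let \(p_i\) be the prime underlying \(q_i\). Take symmetric integer polynomials \(T_i\equiv\modg_{q_i}(x)\pmod{p_i^{L_i}}\) of degree \(O(q_iL_i)\), obtained from Fermat--Euler plus modulus amplification.
\item Lift them as in the appendix to \(V_i=c_iT_i=\modg_{q_i}(x)2^{\ell_i}+E_i+k_i2^{\ell_i+e_i}\), with \(\abs{E_i}\le2^{\ell_i-e_i}\) and \(k_i\) integer-valued.
\item Use the product \(V_1V_2\cdots V_r\). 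Expanding term by term, it equals \(\modm(x)2^{\sum_i\ell_i}\) up to an error \(O(2^{\sum_i\ell_i-e_1})\) modulo \(2^{\sum_i\ell_i+e_1}\). This holds provided the precisions cascade as \(e_{i+1}\ge\ell_i+2e_i+\log_2(1+\max_x\abs{k_i(x)})+O(1)\), which is \(O(q_ie_i\log n)\).
\end{itemize}
The resulting degree is \(\sum_iO(q_ie_i)=O(\prod_iq_i\cdot e\log^{r-1}n)\), which is linear in \(m\) when \(r=O(1)\), and the polynomial is symmetric. By contrast, the generic induction over arbitrary layered circuits that your proposal runs only gives bounds polynomial in \(\prod_i q_i\), with an exponent that grows with the depth.
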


Using this result, we can prove Theorem~\ref{introthm:periodapprox} as follows.

\begin{proof}[Proof of Theorem~\ref{introthm:periodapprox}]
    First, we prove the statement for \(f=\modm\) function, where \(m\) has \(r=O(1)\) distinct prime divisors.
    We use Lemma~\ref{lem:modmconvert} to obtain a symmetric layered modular circuit computing \(\modm\).
    Then, we use Lemma~\ref{lem:modmtorus} to obtain a torus polynomial of degree \(m \cdot (\log(n))^{O(1)}\) approximating \(\modm\) within \(\frac{1}{20n}\) error.
    Using Lemma~\ref{lem:symtorus}, we get that the polynomial is symmetric.
    This completes the \(\modm\) case.

    Now, consider any symmetric function \(f : \{0, 1\}^n \to \{0, 1\}\) of period \(m\).
    Any such function can be represented as \(\modm^A\) for some \(A \subseteq [m]\), defined as follows.
    \[
    \modm^A(x_1, \ldots, x_n) =
    \begin{cases}
        1 & \sum_i x_i \in A \\
        0 & \sum_i x_i \notin A
    \end{cases}
    \]
    We write \(\modm^A\) as a disjoint \(\OR\) of \(\modm^{\{a\}}\) for each \(a \in A\).
    Each \(\modm^{\{a\}}\) over \(n\) inputs can be rewritten as \(\modm\) over at most \(2n\) inputs with some dummy \(1\) inputs.
    Then, for each \(\modm^{\{a\}}\) gate, we obtain a symmetric torus polynomial \(P_{\{a\}}\) approximating it within \(\frac{1}{20n^2}\) error using Lemma~\ref{lem:modmtorus}.

    Finally, note that \(\modm^A\) is a disjoint \(\OR\) over \(\modm^{\{a\}}\).
    Hence, in the sum \(\sum_{a \in A} P_{\{a\}}\), at most one polynomial contributes a fractional part of \(\frac{1}{2}\).
    Moreover, each \(P_{\{a\}}\) contributes an error within \(\frac{1}{20n^2}\), yielding a total error of \(\frac{\abs{A}}{20n^2} \leq \frac{1}{20n}\).
    Therefore, \(\sum_{a \in A} P_{\{a\}}\) is a symmetric torus polynomial approximating \(\modm^A\) within an error of \(\frac{1}{20n}\).
    
    The degree of this polynomial is as claimed in the statement.
    This completes the proof.
\end{proof}

\section*{Discussion and Open Problems}
One immediate direction is to make progress on Conjecture~\ref{conj:genandlb}.
In this direction, we present evidence that the proof of~\cite[Theorem 6]{krishan2026lower} presented by the authors has room for improvement.
Formally, the statement we prove improves the degree lower bound by an additive factor of \(1\).
The improvement is more conceptual, even though the improvement on the degree bound is minor.
It highlights that there is room for improvement in their proof, and further analysis may lead to a resolution of Conjecture~\ref{conj:genandlb}.
Note that we borrow heavily from the notation setup in~\cite{krishan2026lower} for brevity.

\begin{theorem}
    Any torus polynomial approximating \(\AND\) within an error smaller than \(\varepsilon\) must have degree more than \(\log\rb*{\frac{1}{\varepsilon}} - 1\).
\end{theorem}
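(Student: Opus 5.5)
The approach is to improve the finite-difference argument behind \cite[Theorem 6]{krishan2026lower}. That argument takes an alternating sum of \(P\) over a \((d+1)\)-dimensional subcube, which has \(2^{d+1}\) error terms. I would instead read off a single top-degree coefficient of \(P\) from \(d\)-dimensional subcubes, which have only \(2^{d}\) error terms each. Since \(P\) is only evaluated on \(\{0,1\}^n\), I take \(P\) multilinear of degree \(d\) and assume \(n \geq d+1\). Write \(P(a) = Z(a) + \AND(a)/2 + \delta(a)\) with \(\abs{\delta(a)} < \varepsilon\). Fix a set \(S\) with \(\abs{S} = d\), and let \(c_S\) be the coefficient of \(x_S\) in \(P\).

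\textbf{Step 1 (two exact formulas for \(c_S\)).} Because \(\deg P \le d\), no monomial of \(P\) strictly contains \(S\). Hence, for every fixed assignment \(b\) to the variables outside \(S\), the alternating sum
\[
\sum_{T \subseteq S} (-1)^{d-\abs{T}} P(1_T, b)
\]
equals \(c_S\) exactly. I would evaluate this in two ways.
\begin{itemize}
\item With \(b\) all ones: \(\AND\) is \(1\) only at \(T=S\), so \(c_S \equiv \tfrac12 + e_1 \pmod 1\), where \(e_1 = \sum_T \pm\delta(1_T,1)\).
\item With \(b\) containing a zero: \(\AND\) vanishes on the whole subcube, so \(c_S \equiv e_2 \pmod 1\), where \(e_2 = \sum_T \pm \delta(1_T,b)\).
\end{itemize}
Each of \(e_1, e_2\) is a signed sum of \(2^d\) values of \(\delta\), so each has absolute value less than \(2^d\varepsilon\).

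\textbf{Step 2 (the extra factor of two).} Subtracting the two congruences only recovers the old bound \(\tfrac12 \le \abs{e_1}+\abs{e_2} < 2^{d+1}\varepsilon\). To gain the additive \(1\), I want to show that the zero-side contribution can effectively be removed, i.e.\ to reach \(\tfrac12 \le \abs{e_1} < 2^d \varepsilon\). This gives \(d > \log(1/\varepsilon) - 1\). The freedom I would exploit is the family of all \(2^{n-d}-1\) base points \(b\) with a zero, each of which yields \(c_S \equiv e_2(b)\).

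Concretely, I would take the weighted combination over \(b\) of the \(d\)-cube alternating sums of \(P\). As long as the weights sum to \(1\), this still equals \(c_S\). I would choose the weights so that the \(\delta\)-terms at points with \(\AND = 0\) cancel against neighbouring cubes, since adjacent base points share the \(\delta\) values in a telescoping way. A second option is to use the extra variables to replace \(P\) by \(P - R\), where \(R\) is an integer-coefficient polynomial chosen so that \(c_S\) becomes an integer exactly. Subtracting \(R\) changes neither the approximation property nor the degree, and an integer \(c_S\) would collapse the first bullet to \(\tfrac12 \equiv e_1 \pmod 1\).

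\textbf{Step 3 (conclusion).} Once \(\tfrac12 \le \abs{e_1} < 2^d\varepsilon\) is established, taking logarithms gives \(d > \log(1/\varepsilon) - 1\), as claimed.

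\textbf{Main obstacle.} Step 2 is the crux: I have not yet checked that either the cancellation or the normalization can be carried out. I would try the normalization by an integer polynomial \(R\) first, because it isolates exactly the one-half that the all-ones point contributes, and fall back on the weighted telescoping over base points \(b\) if that fails.
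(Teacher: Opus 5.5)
Your Step 1 is correct. It is the finite-difference form of the argument behind \cite[Theorem 6]{krishan2026lower}, and it yields only \(d>\log(1/\varepsilon)-2\). Step 2, however, is a genuine gap, and neither of your mechanisms can close it.

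Normalizing by \(R\) fails because \(c_S \bmod 1\) is invariant. Any \(R\) that is integer-valued on \(\{0,1\}^n\) has integer multilinear coefficients: M\"obius inversion gives \(c_U(R)=\sum_{T\subseteq U}(-1)^{\abs{U}-\abs{T}}R(1_T)\). So \(P-R\) has the same \(c_S\) modulo \(1\).

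Telescoping fails because, for fixed \(S\), the subcubes \(\{(1_T,b)\}_{T\subseteq S}\) attached to distinct base points \(b\) are pairwise disjoint; they partition the cube. So no \(\delta\)-values are shared between them.

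More decisively, the target \(\tfrac12\le\abs{e_1}\) is false. For \(d=1\) and any \(n\ge 2\), put \(\varepsilon=\frac{n-1}{4n}\) and \(P(x)=-\varepsilon+\frac{2\varepsilon}{n-1}\sum_i x_i\).
Then \(\abs{P(x)}\le\varepsilon\) for \(\abs{x}<n\) and \(P(1^n)=\tfrac12-\varepsilon\), so \(P\) approximates \(\AND_n\) with error \(\varepsilon<\tfrac14\).
Here \(c_S=\frac{1}{2n}\), and \(e_1=-2\varepsilon\) has \(\abs{e_1}=\tfrac12-\tfrac{1}{2n}\); the remaining \(\frac{1}{2n}\) sits in \(e_2\).
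This example even contradicts the statement as literally phrased (degree \(1\) with error below \(2^{-2}\)). So does, for \(n=d+1\), the multilinear extension of \(\tfrac12\AND_n(x)-2^{-(d+2)}(-1)^{n-\abs{x}}\), whose top coefficient vanishes.
A correct argument must therefore use that \(n\) is large relative to \(d\). A bound read off a single coefficient from \(2^d\)-point subcubes cannot see \(n\).

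The idea you are missing is where the paper actually gains the factor \(2\). The paper keeps the \((d+1)\)-dimensional alternating-sum witnesses \(\gamma\in\nulls(M(n,d))\) with \(\norm{\gamma}_1=2^{d+1}\) from \cite[Construction 2]{krishan2026lower}. It uses them only on subcubes \(\{x: a''\subseteq x\subseteq a\}\) with \(a\neq 1^n\), where \(\AND\) vanishes identically.

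The proof first normalizes \(Z\) to \(0\) on weights \(\le d\) (\cite[Lemma 9]{krishan2026lower}) and then inducts on Hamming weight. Each witness gives \(\abs{Z(a)}\le 2^{d+1}\varepsilon<1\), hence \(Z(a)=0\). The separation needed is now \(1\) (a nonzero integer) rather than \(\tfrac12\), and that is exactly the factor.

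Once \(Z\equiv 0\) off \(1^n\), \(P\) is a real degree-\(d\) polynomial bounded by \(\varepsilon\) off \(1^n\) and within \(\varepsilon\) of \(Z(1^n)+\tfrac12\) at \(1^n\). In other words, it is a rescaled real approximation of \(\AND_n\). The top point is then handled globally by the \(\Omega(\sqrt{n\log(1/\varepsilon)})\) approximate-degree bound, and that is where \(n\) enters. As the examples show, this last step tacitly needs \(n\) large compared with \(d\) and the rescaled error bounded away from \(\tfrac12\), which fails for \(d=1\).
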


\begin{proof}
Fix some \(n, d\), denote \(f = \AND_n\), and choose any \(\varepsilon < \frac{1}{2^{d+1}}\).
We use the method described in~\cite[Theorem 8]{krishan2026lower} for the proof.
To prove the lower bound, we need to find a witness \(\gamma \in \nulls(M(n, d))\) for each \(Z : \{0, 1\}^n \to \mathbb{Z}\) such that:
\begin{equation}
\label{eqn:nulls}
\abs*{\ip*{Z + \frac{f}{2}, \gamma}} > \varepsilon \norm{\gamma}_1
\end{equation}

First, we use~\cite[Lemma 9]{krishan2026lower}, which implies that \(Z(a) = 0\) for any \(a \in \{0, 1\}^n\) with \(\abs{a} \leq d\).
Now, choose any point with \(\abs{a} = d+1\).
Construct a vector \(\gamma \in \nulls(M(n, d))\) using~\cite[Construction 2]{krishan2026lower} with \(S_1 = \emptyset, S_2 = a\) and \(I = [d+1]*\) as the input to the construction.
For this vector, we have \(\norm{\gamma}_1 = 2^{d+1}\), \(\gamma_a = 1\), and \(\gamma_{a'} = 0\) for any point \(\abs{a'} \geq d+1\).
Hence, \(\ip*{Z + \frac{f}{2}, \gamma} = Z(a) \gamma_a = Z_a\), whereas \(\varepsilon \norm{\gamma}_1 < 1\).
Therefore, if \(\abs{Z(a)} \geq 1\), we have that inequality~\ref{eqn:nulls} is satisfied.
This leaves us to consider \(Z(a) = 0\) as the only remaining possibility.
Note that this argument applies to each point \(a\) with \(\abs{a} = d+1\) independently to show that \(Z(a) = 0\) for that point.

We continue with the argument inductively, stopping before we reach Hamming weight \(n\), assuming that \(Z(a') = 0\) for all points with \(\abs{a'} = i\) for some \(i \geq d+1\).
Then, consider a point \(a\) with \(\abs{a} = i+1\), and invoke~\cite[Construction 2]{krishan2026lower} with \(S_1 = a'', S_2 = a\) and \(I = [d+1]^*\) for some \(a'' \subseteq a\) of size \(\abs{a''} = \abs{a} - (d+1)\).
Again, as in the base case, if \(\abs{Z(a)} = 1\), then inequality~\ref{eqn:nulls} is satisfied.
Hence, \(Z(a) = 0\) is the only possibility that remains.

Finally, once we reach Hamming weight \(n\) with \(a = 1^n\), we have \(Z(a') = 0\) for any point with \(\abs{a'} < n\).
Now, as the integer parts up to Hamming weight \(n-1\) are \(0\), we can think of the torus polynomial as a real polynomial approximating the following function \(g\):
\[
g(a) = \begin{cases}
0 & a \neq 1^n \\
Z(a) + \frac{1}{2} & a = 1^n
\end{cases}
\]
However, any real polynomial that approximates \(g\) within \(\varepsilon\) error also approximates \(\AND\) within at most \(\varepsilon\) error.
As the real approximation degree of \(\AND\) for \(\varepsilon\) error is \(\Omega\rb*{\sqrt{n \log(\frac{1}{\varepsilon})}}\)~\cite{BT20}, such a polynomial cannot exist.
This completes the proof.
\end{proof}

The conceptual contribution of the proof above is to use multiple vectors to witness the lower bound.
In~\cite{krishan2026lower}, the authors had used a single witness for each \(Z\) to prove their result.
Hence, this can be thought of as a step towards further improvements.

In another direction, the proof technique of Theorem~\ref{thm:nestedsymlb} works for any subgroup \(\Gamma \leq S_n\) which simplifies to a large enough copy of the symmetric group under two restrictions: setting variables equal to each other, and setting variables as \(1\).
In particular, consider \(\Gamma\) that collapses to \(S_{n'}\) for some \(n'\) that is superpolynomial in \(\log(n)\).
Then, using our arguments, one obtains a superpolynomial lower bound on the size of \(\Gamma\)-symmetric \(\ccz\) circuits computing \(\AND\).
We leave it open to describe such subgroups \(\Gamma \leq S_n\).

\begin{problem}
    Describe subgroups \(\Gamma \leq S_n\) such that under the following restrictions:
    \begin{itemize}
        \item setting variables equal to each other,
        \item setting variables as \(1\),
    \end{itemize}
    the group collapses to \(S_{n'}\) for some \(n' = \log^{\omega(1)}(n)\).
\end{problem}

\noindent
\textbf{Acknowledgments:} We would like to thank anonymous MFCS 2026 reviewers for their helpful and detailed comments.
\appendix

\addtocontents{toc}{\protect\setcounter{tocdepth}{-1}}

\section{Appendix: Proof of Lemma~\ref{lem:modtorus}}

\begin{proof}
Let \(T\) be a polynomial such that \(T(x) = \modq(x) \mod q^{\ell'}\) for some \(\ell'\) to be chosen later.
Here, \(T\) is constructed by combining Fermat-Euler theorem and modulus amplifying polynomials~\cite{beigel1994acc}, which has degree \(O(\ell')\).
Define \(Q = \rb*{w \ceil*{\frac{2^\ell}{q^{\ell'}}} + 2^\ell} T\) with \(w = -q^{\ell'} \mod 2^e\).

The calculations are the same as those communicated to us by the authors in~\cite{bhrushundi2019torus}.
We reproduce them for the sake of completeness.
Fix some \(x\), and let \(T(x) = aq^{\ell'} + \modq(x)\).
Now, we continue the calculation as follows:
\begin{align*}
    Q(x) &= \rb*{w \ceil*{\frac{2^\ell}{q^{\ell'}}} + 2^{\ell}} T(x) \\
    &= \rb*{w \rb*{\frac{2^\ell}{q^{\ell'}} + 1 - \left\{\frac{2^\ell}{q^{\ell'}}\right\}}} \rb*{aq^{\ell'} + \modq(x)} \\
    &= \rb*{wa2^{\ell} + q^{\ell'} a2^\ell} + 2^\ell \modq(x) + waq^{\ell'} - waq^{\ell'} \left\{\frac{2^{\ell}}{q^{\ell'}}\right\} + w \ceil*{\frac{2^{\ell}}{q^{\ell'}}} \modq(x)
\end{align*}

Now, note that \(\rb*{wa2^{\ell} + q^{\ell'}a 2^{\ell}}  \mod 2^{\ell + e} = 0\).
Moreover, \(\abs{aq^{\ell'}} \leq 2^{O(q\ell' \log(n))}\) and \(w \leq 2^e\).
Hence, if we ensure the following:
\[
w \rb*{aq^{\ell'} \rb*{1 - \left\{\frac{2^\ell}{q^{\ell'}}\right\}} + \ceil*{\frac{2^{\ell}}{q^{\ell'}}}} \leq 2^{\ell - e}
\]
Then, \(E(x) \leq 2^{\ell - e}\), as required by the statement.
Choosing a large enough \(\ell' = O(e)\) and \(\ell = O(qe \log(n))\) suffices to ensure that the calculation goes through.
\end{proof}

\bibliographystyle{alpha}
\bibliography{references}

\end{document}